\documentclass{article} 
\usepackage{iclr2026_conference,times}

\usepackage[T1]{fontenc}
\usepackage[utf8]{inputenc}  
\usepackage{hyperref}
\usepackage{url}
\usepackage{amsmath,amssymb,amsthm}
\usepackage{bm}
\usepackage{booktabs}
\usepackage{enumitem}
\usepackage{graphicx}
\usepackage{microtype}
\usepackage{titletoc}
\usepackage{multirow}
\newtheorem{assumption}{Assumption}
\newtheorem{lemma}{Lemma}
\newtheorem{theorem}{Theorem}
\newtheorem{proposition}{Proposition}
\newtheorem{corollary}{Corollary}
\newtheorem{remark}{Remark}

\newcommand{\Eold}{\mathbb{E}_{\bm a\sim\pi_{\bm\theta_{\mathrm{old}}}}}
\newcommand{\R}{\mathrm{R}}
\newcommand{\Var}{\mathrm{Var}}
\newcommand{\one}{\mathbf 1}
\newcommand{\SA}{S^{A}}
\newcommand{\SR}{S^{R}}
\newcommand{\tS}{\widetilde S}

\title{Aggregate in the Advantage, Not the Ratio:\\
A Canonical-Form Analysis of Cooperative Multi-Agent Policy Optimization}

\author{Zijian Zhao$^1$, Sen Li$^{1,2}$ \thanks{Corresponding Author: Sen Li} \\
$^1$The Hong Kong University of Science and Technology \\
$^2$The Hong Kong University of Science and Technology (Guangzhou)
}

\begin{document}
\maketitle

\begin{abstract}
Multi-agent policy optimization, exemplified by PPO-based methods, is a key branch of cooperative Multi-Agent Reinforcement Learning (MARL). A central design question is how many neighboring agents\footnote{In this paper, "neighbors" refer not only to physical proximity but also to agents whose actions influence one another.} to aggregate in order to effectively utilize global information for cooperation. This decision must be made along two dimensions: in the \textbf{advantage} (which agents' rewards contribute to the credit signal) and in the \textbf{ratio} (which agents' likelihood ratios form the clipped importance weight). Existing methods occupy scattered, underexplored points on these two axes: IPPO treats both separately; MAPPO pairs a team-level advantage with per-agent ratios; HAPPO employs sequential ratios with per-agent advantages; and single-agent reductions operating on factorized joint policies aggregate both into fully joint products.
We formalize these two design choices as support matrices $\SA$ and $\SR$, and prove a canonical structure: the expected multi-agent policy optimization objective depends on the pair $(\SA,\SR)$ only through their matrix product $\tS=\SR\SA$. This yields two key consequences: \emph{(i) Redundancy:} the two support matrices are interchangeable with respect to the signal, meaning neither aggregation pattern is inherently superior. \emph{(ii) Variance Ordering:} the advantage aggregates rewards as a sum (additive variance with an interior bias-variance optimum at the coupling neighborhood), whereas the ratio aggregates likelihood ratios as a product (multiplicative variance that grows exponentially with support size, with no accompanying bias reduction). The resulting design principle is unambiguous: aggregate neighbors in the advantage, sized to the coupling neighborhood, and keep the ratio per-agent. This explains why neighbor-based advantages are more prevalent than neighbor-based ratios in prior heuristic and empirical designs.
We prove these results under specified assumptions and validate them across four carefully designed synthetic cooperative games and a real-world large-scale traffic-signal control task. The code for our experiments is available at \url{https://github.com/RS2002/MAPO}.
\end{abstract}

\section{Introduction}

Cooperative Multi-Agent Reinforcement Learning (MARL) built on policy optimization, represented by Proximal Policy Optimization (PPO) \citep{schulman2017proximal} and Trust Region Policy Optimization (TRPO) \citep{schulman2015trust}, faces a recurring question when updating each agent: \emph{How many other agents should be aggregated into that agent's update?} This question is easy to overlook because it is answered twice, in two different places:
\begin{itemize}[leftmargin=1.4em,itemsep=1pt,topsep=2pt]
\item \textbf{Advantage support} $\SA$: which agents' rewards are summed into agent $i$'s advantage---the credit signal that indicates how good the joint action was for $i$; and
\item \textbf{Ratio support} $\SR$: which agents' likelihood ratios are multiplied into agent $i$'s clipped importance weight---the trust-region object that keeps the update near the behavior policy.
\end{itemize}
Both range from per-agent (aggregate no one else) to joint (aggregate everyone), and existing methods sit at different points on each: IPPO \citep{de2020independent} keeps both per-agent, while MAPPO \citep{yu2022mappo} pairs a team advantage (all rewards) with a per-agent ratio; Traffic and networked methods use neighborhood advantages \citep{chu2020ma2c} with per-agent ratios; Sequential learning methods employ joint sequential ratio products while keeping the advantage per-agent \citep{kuba2022trust, zhong2024heterogeneous}; Casting the whole team as a single multi-action agent and running vanilla PPO on the factorized joint policy (ratio) and joint advantage yields a route taken by centralized single-agent reductions of MARL \citep{cmat2025, zhao2026triplebert}. Yet the two supports have never been analyzed together, and there is no principle for where to sit on either, which this paper aims to address through canonical-form analysis.

Our starting point is that the two supports are not independent degrees of freedom. We prove (Theorem~\ref{thm:canon}) that the expected multi-agent policy gradient depends on the pair $(\SA,\SR)$ only through their matrix product $\tS=\SR\SA$. This is a gauge freedom: $\SA$ and $\SR$ are interchangeable up to their product, so neither support is intrinsically redundant; any cross-agent aggregation placed in the ratio can equally be placed in the advantage, and vice versa (Corollary~\ref{cor:canon}). What breaks this symmetry is variance, not signal. Aggregating rewards through the advantage is a sum of bounded terms (additive variance), whereas aggregating ratios is a product of importance weights, whose variance compounds multiplicatively in the number of factors (Lemma~\ref{lem:var}). Since the two routes realize the same expected gradient but the ratio route strictly inflates variance, the variance-optimal choice is to place all cross-agent aggregation in the advantage and keep the ratio per-agent (Corollary~\ref{cor:design}). It is in this precise, variance-ordered sense that cross-agent importance ratios are redundant: they add no signal that the advantage cannot, and cost variance that the advantage does not. This also explains a standing asymmetry in previous empirical and heuristic designs: neighborhood advantages are widely used, while local ratios are comparatively rare.

In conclusion, we contribute: (i) a canonical form showing that the two supports enter the expected gradient only through $\tS=\SR\SA$, establishing their mutual redundancy; (ii) a variance-ordering that breaks the tie---advantage sums are additive while ratio products are exponential---yielding the design rule: aggregate in the advantage and keep the ratio per-agent; and (iii) an empirical validation across synthetic games and a real-world traffic signal control task that pin down when each effect appears. We discuss connections to prior work throughout, and provide an extended treatment in Appendix~\ref{sec:related}.

\section{Preliminaries}\label{sec:prelim}

\paragraph{Single-agent policy optimization.}
PPO \citep{schulman2017proximal} optimizes a policy $\pi_{\bm\theta}$ by taking several gradient steps on a batch collected under a behavior policy $\pi_{\bm\theta_{\mathrm{old}}}$, correcting the off-policy mismatch with a clipped importance ratio $\varrho(\bm\theta)=\pi_{\bm\theta}(a\mid s)/\pi_{\bm\theta_{\mathrm{old}}}(a\mid s)$ and the advantage $A$, estimated by Generalized Advantage Estimation (GAE) \citep{schulman2015high}:
\begin{equation}
L^{\mathrm{PPO}}(\bm\theta)=\mathbb E\big[\min\big(\varrho A,\ \mathrm{clip}(\varrho,
1{-}\epsilon,1{+}\epsilon)A\big)\big].
\label{eq:ppo}
\end{equation}
Two objects drive the update: the advantage $A$ (the credit signal) and the ratio $\varrho$ (the trust-region weight).

\paragraph{Cooperative multi-agent policy optimization.}
In a cooperative Markov game \citep{littman1994markov} (formalized in Appendix \ref{app:mamdp}) with $n$ agents, each agent $i$ has a policy $\pi^i_{\bm\theta}$; under Centralized-Training with Decentralized-Execution (CTDE) and Centralized-Training with Centralized-Execution (CTCE) schemes alike \citep{jin2025comprehensive}, the joint policy factorizes as $\pi_{\bm\theta}(\bm a\mid s)=\prod_i\pi^i_{\bm\theta}(a^i\mid s)$. Lifting \eqref{eq:ppo} to $n$ agents requires two design decisions that are usually made implicitly:
\begin{enumerate}[leftmargin=1.4em,itemsep=1pt,topsep=2pt]
\item \textbf{Which reward drives agent $i$'s advantage?} MAPPO \citep{yu2022mappo} uses the team advantage (all agents share one $A$ built from the global reward); independent learners (IPPO) \citep{de2020independent} use each agent's own reward; value-decomposition, difference-reward, and neighbor-advantage methods sit in between.
\item \textbf{Whose ratios enter agent $i$'s clipped weight?} Independent and decentralized actors (IPPO, MAPPO) clip a per-agent ratio $\varrho^i$. A fully centralized controller that treats $\pi=\prod_i\pi^i$ as one policy clips the joint ratio $\prod_j\varrho^j$, in which every factor is differentiated jointly \citep{cmat2025}. Sequential methods such as HAPPO \citep{kuba2022trust} form a compound ratio $\prod_{j\le i}\varrho^j$. We study the jointly-differentiated case, whose supports range from per-agent ($\SR=I$) to fully joint ($\SR=\one\one^\top$).
\end{enumerate}
These are the two knobs this paper isolates. We call the first the advantage support and the second the ratio support, and formalize both as $0/1$ matrices in Section~\ref{sec:setup}. 

\paragraph{Why the choice matters: a bias--variance tension.} Both supports can reduce the same bias, but they pay for it very differently---a tension we preview here with a direct measurement of the multi-agent policy gradient, deferring the full study to Sec.~\ref{sec:exp}. On a single-step dense pairwise game ($n=20$ agents, $K=4$ actions each, ring coupling of radius $\rho^\star=4$; full definition in Sec.~\ref{sec:exp:setup}), we fix a reference policy, estimate the true team-return policy gradient by Monte Carlo, and then form the clipped-surrogate gradient estimator induced by a given pair of supports. For that estimator we report three quantities (all of the gradient estimator, in log scale): its squared bias, variance, and their sum, i.e., the Mean Square Error (MSE). Here a support of radius $\rho$ means each agent aggregates its $\rho$-hop neighborhood on the coupling graph: $\rho=0$ is per-agent (aggregate no one else), $\rho=\rho^\star$ exactly covers the true coupling neighborhood, and larger $\rho$ over-aggregates. Panels (a,b) vary one support at a time---the advantage radius $\rho_A$ with the ratio held per-agent (a), and the ratio radius $\rho_R$ with the advantage held per-agent (b). The results show that the two bias curves are identical: enlarging either support removes the same missing-coupling bias, so on the expected gradient the two supports are interchangeable (a redundancy we prove in Sec.~\ref{sec:setup}). The two variance curves, in contrast, differ sharply: the advantage aggregates rewards additively, so its variance grows gently and its MSE bottoms out near the coupling radius $\rho^\star$, whereas the ratio aggregates likelihood ratios multiplicatively, so its variance grows far faster and its MSE turns up at a smaller radius. Panels (c,d) isolate this: they fix a matched effective support (so both realize the same expected gradient) and compare the two ways of reaching it---aggregating through the advantage (path $\mathbf P$) versus through the ratio (path $\mathbf Q$)---plotting their bias (c) and variance (d) as the support grows; the bias tracks together while the variance separates by up to $9\times$. Same benefit, different cost---understanding why, and what it implies for where to sit on each axis, is the question we take up.
\begin{figure}[t!]
\centering
\includegraphics[width=\textwidth]{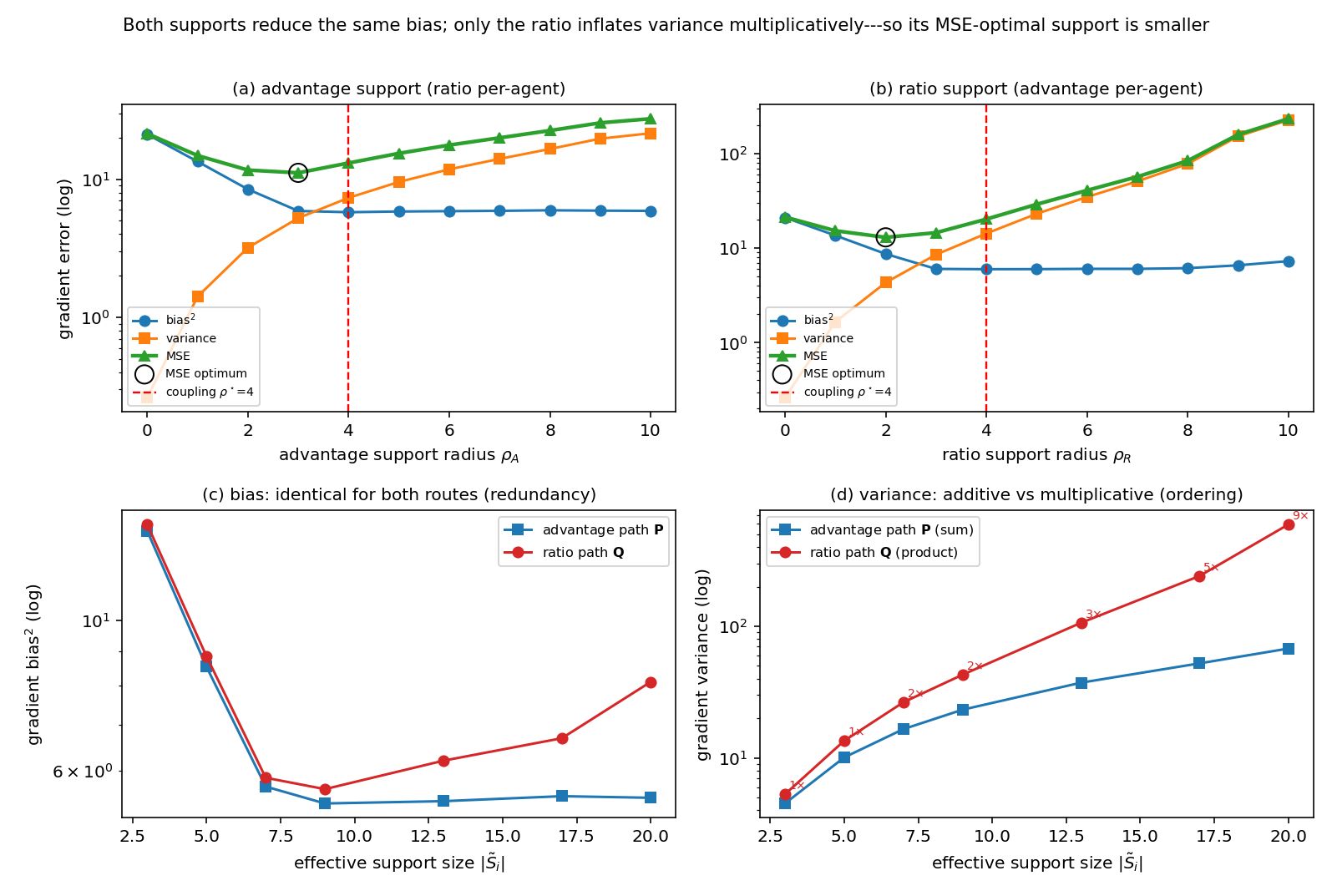}
\caption{Bias, variance, and MSE of the multi-agent policy-gradient estimator on a single-step dense-pairwise game, as a function of the aggregation support (log axes throughout). (a) Sweep of the advantage-support radius $\rho_A$ (with a per-agent ratio) and (b) sweep of the ratio-support radius $\rho_R$ (with a per-agent advantage): each panel plots the estimator's squared bias, its variance, and their sum (MSE), with the MSE minimum marked ($\circ$) and the coupling radius $\rho^\star=4$ shown dashed. (c,d) At a matched effective support $\tS=\text{ring}$ of increasing size, the advantage path $\mathbf P$ ($\SA=\text{ring}, \SR=I$) and the ratio path $\mathbf Q$ ($\SA=I, \SR=\text{ring}$), which share the same expected gradient: (c) squared bias and (d) variance of each path, the annotations in (d) giving the ratio $\Var(\mathbf Q)/\Var(\mathbf P)$.}
\label{fig:motiv}
\end{figure}

\section{Analysis: Two Supports, One Product}\label{sec:analysis}

\subsection{Problem Setup}\label{sec:setup}

We first analyze a single decision step; Remark~\ref{rem:mdp} and the episodic experiment (Sec.~\ref{sec:exp}) provide the finite-horizon extension.

\paragraph{Agents, policies, actions.}
Let $\mathcal N=\{1,\dots,n\}$ be the set of agents and $\mathcal A=\{1,\dots,K\}$ a finite action set. Agent $i$ has policy $\pi^i_{\bm\theta}(\cdot)\in\Delta(\mathcal A)$ with its own parameter block $\bm\theta^i$; the full parameter is $\bm\theta=(\bm\theta^1,\dots,\bm\theta^n)$ with disjoint blocks. The joint policy factorizes as
\begin{equation}
\pi_{\bm\theta}(\bm a)=\prod_{i=1}^n\pi^i_{\bm\theta}(a^i),\qquad
\bm a=(a^1,\dots,a^n)\in\mathcal A^n .
\label{eq:factor}
\end{equation}

\paragraph{Rewards and coupling.}
Agent $i$ receives reward $r_i(\bm a)$; the team return is $\R(\bm a)=\sum_i r_i(\bm a)$. The coupling graph has symmetric adjacency $C\in\{0,1\}^{n\times n}$ with $C_{ii}=1$, where $C_{ij}=1$ iff $r_i$ depends on $a^j$. We write $\partial i=\{j:C_{ij}=1\}$. (For simplicity, we omit the dependence on the state in the reward notation here.)

\paragraph{Two supports.}
Fix a behavior parameter $\bm\theta_{\mathrm{old}}$ and, for a candidate $\bm\theta$, define the per-agent likelihood ratio $\varrho_j(\bm a)=\pi^j_{\bm\theta}(a^j)/\pi^j_{\bm\theta_{\mathrm{old}}}(a^j)$. Let $\SA,\SR\in\{0,1\}^{n\times n}$ be symmetric support matrices with $S_{ii}=1$. The advantage and importance weight of agent $i$ are given by
\begin{equation}
A_i=\sum_{j:\,\SA_{ij}=1}\,r_j(\bm a)-b_i,
\qquad
w_i(\bm\theta)=\!\!\prod_{j:\,\SR_{ij}=1}\!\!\varrho_j(\bm a),
\label{eq:supports}
\end{equation}
with $b_i$ any control variate independent of $\bm a$. Special cases include $\SA=I$ (independent advantage), $\SA=\one\one^\top$ (team advantage), $\SR=I$ (per-agent ratio), and $\SR=\one\one^\top$ (joint ratio $\prod_j\varrho_j$).

\paragraph{Surrogate and gradient.}
The (unclipped) surrogate is $L(\bm\theta)=\sum_i\Eold[w_i(\bm\theta)A_i]$, with $A_i$ detached (constant in $\bm\theta$). Since $\nabla_{\bm\theta}w_i=w_i\sum_{j:\SR_{ij}=1}\nabla_{\bm\theta}\log\pi^j(a^j)$ and $\nabla_{\bm\theta^m}\log\pi^j=0$ for $j\ne m$, we have
\begin{equation}
\bm g_m(\bm\theta)=\nabla_{\bm\theta^m}L
=\Eold\Big[\Big(\textstyle\sum_{i:\,\SR_{im}=1}w_i(\bm\theta)A_i\Big)
\nabla_{\bm\theta^m}\log\pi^m(a^m)\Big].
\label{eq:grad}
\end{equation}

\subsection{Assumptions and Their Justification}\label{sec:assump}

\begin{assumption}[Factorized policy / conditional action independence]
\label{as:indep}
Under both $\bm\theta_{\mathrm{old}}$ and any candidate $\bm\theta$, the joint policy factorizes as in \eqref{eq:factor}; equivalently, given the parameters, actions are sampled independently across agents.
\end{assumption}

\noindent\textit{Justification.} This is not an extra modeling restriction but rather the defining structure of the methods under study. In most MARL algorithms (e.g. IPPO, MAPPO, HAPPO) each agent samples $a^i\sim\pi^i(\cdot\mid o^i)$ independently by construction, so \eqref{eq:factor} holds exactly. The assumption would fail only for policies with an explicitly autoregressive action head (e.g., a decoder that conditions $a^i$ on realized $a^{<i}$); we exclude that case and note that it corresponds to a different (chain-rule) factorization \citep{wen2022multi}. Importantly, Assumption~\ref{as:indep} concerns conditional independence given parameters (and context); it does not claim that the agents' actions are marginally uncorrelated---they are correlated through shared observations and, during learning, through the coupled reward.

\begin{assumption}[Full support / finite second moment]\label{as:supp}
$\pi^i_{\bm\theta}(a)>0$ for all $i,a$ and all $\bm\theta$ in a neighborhood of $\bm\theta_{\mathrm{old}}$, so each $\varrho_j$ is finite with $\mathbb E[\varrho_j^2]<\infty$.
\end{assumption}

\noindent\textit{Justification.} Softmax policies, the standard parameterization, satisfy full support automatically. Finite second moments are required for any importance-weighted estimator to have finite variance and are standard in PPO and TRPO analysis.

\begin{assumption}[Symmetric coupling]\label{as:sym}
$C$ is symmetric with $C_{ii}=1$, and $r_i$ depends on $a^j$ iff $C_{ij}=1$.
\end{assumption}

\noindent\textit{Justification.} Symmetry holds whenever coupling arises from shared or pairwise terms (shared resources, pairwise congestion or collision, common team reward), which covers the cooperative tasks of interest. The directional case is a routine extension (Remark~\ref{rem:dir}) obtained by replacing $C$ with the directed influence matrix; the main theorem does not require symmetry, which we adopt only to state the coupling radius cleanly.

\begin{assumption}[Bounded rewards]\label{as:bdd}
$|r_i(\bm a)|\le r_{\max}<\infty$ for all $i,\bm a$.
\end{assumption}

\noindent\textit{Justification.} This is standard and holds for any bounded reward; by truncation, it also applies to sub-Gaussian rewards up to negligible tails. It ensures that the advantage-side aggregation has variance $O(|\SA|)$ rather than exponential.

\subsection{Ratio Moments}
\begin{lemma}[Unbiased weight]\label{lem:mean}
Under Assumptions~\ref{as:indep}--\ref{as:supp}, for any $S\subseteq\mathcal N$, $\Eold\big[\prod_{j\in S}\varrho_j\big]=1$; in particular $\mathbb E[w_i]=1$ for every ratio support.
\end{lemma}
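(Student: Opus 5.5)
The plan is to compute the expectation directly, using the product structure of the behavior policy from Assumption~\ref{as:indep}. Fix $S\subseteq\mathcal N$. Under $\pi_{\bm\theta_{\mathrm{old}}}$ the joint action $\bm a$ has probability $\prod_{k=1}^n\pi^k_{\bm\theta_{\mathrm{old}}}(a^k)$. Meanwhile $\prod_{j\in S}\varrho_j(\bm a)$ depends only on the coordinates $(a^j)_{j\in S}$, and it factorizes as a product of functions each of a single coordinate.

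\textbf{Step 1.} Write the expectation as a finite sum over $\mathcal A^n$:
\[
\Eold\Big[\prod_{j\in S}\varrho_j\Big]=\sum_{\bm a\in\mathcal A^n}\prod_{k\notin S}\pi^k_{\bm\theta_{\mathrm{old}}}(a^k)\prod_{j\in S}\pi^j_{\bm\theta_{\mathrm{old}}}(a^j)\frac{\pi^j_{\bm\theta}(a^j)}{\pi^j_{\bm\theta_{\mathrm{old}}}(a^j)} .
\]
Assumption~\ref{as:supp} guarantees that every denominator is positive, so each ratio is well-defined and finite.

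\textbf{Step 2.} Cancel the denominators, which turns each factor with $j\in S$ into $\pi^j_{\bm\theta}(a^j)$. The summand is then a product of single-coordinate functions, so the sum over $\mathcal A^n$ splits into a product of $n$ one-dimensional sums:
\[
\prod_{k\notin S}\sum_{a\in\mathcal A}\pi^k_{\bm\theta_{\mathrm{old}}}(a)\;\prod_{j\in S}\sum_{a\in\mathcal A}\pi^j_{\bm\theta}(a).
\]
Each one-dimensional sum equals $1$, because $\pi^k_{\bm\theta_{\mathrm{old}}}$ and $\pi^j_{\bm\theta}$ are distributions in $\Delta(\mathcal A)$. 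Here we use that the candidate policy $\pi^j_{\bm\theta}$ is also a valid distribution, which Assumption~\ref{as:indep} supplies. The product of these ones is $1$. An equivalent probabilistic route avoids the explicit sum: the $\varrho_j$ are independent under the behavior policy, each has mean $\sum_a\pi^j_{\bm\theta}(a)=1$, and the mean of a product of independent factors is the product of their means.

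\textbf{Step 3.} Specialize to $S=\{j:\SR_{ij}=1\}$. This gives $\mathbb E[w_i]=1$ for every $i$ and every ratio support, and the case $S=\emptyset$ (empty product equal to $1$) holds trivially.

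There is no genuine obstacle in this argument. The only point needing care is that independence is required across distinct agents, and the sets are well-behaved here: $S$ contains each index at most once, so no factor $\varrho_j$ appears squared. If an index were repeated, the expectation would become $\mathbb E[\varrho_j^2]\neq1$. Finiteness of the sum is immediate because $\mathcal A$ is finite, so the second-moment condition in Assumption~\ref{as:supp} is not needed for this lemma.
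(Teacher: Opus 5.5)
Your proof is correct and is essentially the paper's argument. The paper factorizes the expectation over agents by independence under $\pi_{\bm\theta_{\mathrm{old}}}$, then cancels $\pi^j_{\bm\theta_{\mathrm{old}}}$ inside each one-dimensional sum (using full support) and invokes normalization of $\pi^j_{\bm\theta}$; that is exactly your ``probabilistic route,'' and your explicit sum over $\mathcal A^n$ only does the cancellation before the factorization. (A minor attribution point: normalization of $\pi^j_{\bm\theta}$ comes from the setup $\pi^j_{\bm\theta}\in\Delta(\mathcal A)$ rather than from Assumption~\ref{as:indep}; you are also right that the second-moment condition is not needed here.)
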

This follows because the behavior-policy expectation factorizes over agents and each per-agent ratio has unit mean; a proof is provided in Appendix~\ref{app:proof:mean}.

\begin{lemma}[Multiplicative variance]\label{lem:var}
Under Assumptions~\ref{as:indep}--\ref{as:supp}, with $\chi^2_j:=\chi^2(\pi^j_{\bm\theta}\,\|\,\pi^j_{\mathrm{old}})=\mathbb E[\varrho_j^2]-1 \ge 0$,
\begin{equation}\label{eq:varprod}
\Var\Big(\prod_{j\in S}\varrho_j\Big)=\prod_{j\in S}(1+\chi^2_j)-1 .
\end{equation}
Hence the weight variance is nondecreasing in $S$, and if $\chi^2_j\ge c>0$ then it is $\ge(1+c)^{|S|}-1$, i.e., exponential in $|S|$.
\end{lemma}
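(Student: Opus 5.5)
The identity \eqref{eq:varprod} follows from computing the first and second moments of the product $\prod_{j\in S}\varrho_j$ under the behavior policy and using the factorization from Assumption~\ref{as:indep}. The monotonicity and exponential-growth claims are then elementary consequences of the closed form.

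\textbf{Step 1 (independence of the factors).} Under Assumption~\ref{as:indep}, the actions $a^1,\dots,a^n$ are independent when $\bm a\sim\pi_{\bm\theta_{\mathrm{old}}}$. Each $\varrho_j(\bm a)$ depends only on $a^j$. Hence the random variables $\{\varrho_j\}_{j\in S}$ are mutually independent, and so are their squares.

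\textbf{Step 2 (moments).} By Lemma~\ref{lem:mean}, $\Eold[\prod_{j\in S}\varrho_j]=1$. Independence of the squares gives
$\Eold[\prod_{j\in S}\varrho_j^2]=\prod_{j\in S}\Eold[\varrho_j^2]=\prod_{j\in S}(1+\chi^2_j)$.
Assumption~\ref{as:supp} guarantees that each factor is finite, so the product is finite as well. For the definition of $\chi^2_j$, a direct computation gives
$\Eold[\varrho_j^2]=\sum_a \pi^j_{\bm\theta}(a)^2/\pi^j_{\mathrm{old}}(a)$,
which is exactly $1+\chi^2(\pi^j_{\bm\theta}\|\pi^j_{\mathrm{old}})$. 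Nonnegativity $\chi^2_j\ge0$ follows from $\Var(\varrho_j)=\Eold[\varrho_j^2]-1\ge 0$, using $\Eold[\varrho_j]=1$. Subtracting the squared mean, which equals $1$, yields \eqref{eq:varprod}.

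\textbf{Step 3 (consequences).} Suppose $S\subseteq S'$. Then
$\prod_{j\in S'}(1+\chi^2_j)=\prod_{j\in S}(1+\chi^2_j)\cdot\prod_{j\in S'\setminus S}(1+\chi^2_j)$.
Every factor in the second product is at least $1$, so the variance is nondecreasing in $S$. If moreover $\chi^2_j\ge c$ for all $j$, each factor is at least $1+c$, which gives the lower bound $(1+c)^{|S|}-1$.

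The only step needing care is Step 1. Independence must hold for the actual sampling distribution (the behavior policy), not just for the parameterization. This is exactly what Assumption~\ref{as:indep} provides, together with disjoint parameter blocks so that each $\varrho_j$ is a function of $a^j$ alone. Everything else is routine algebra.
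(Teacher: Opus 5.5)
Your proposal is correct and follows the paper's proof essentially step for step: you get mean one from Lemma~\ref{lem:mean}, factorize the second moment by independence, identify each factor with $1+\chi^2_j$, subtract $1$, and then read off monotonicity and the $(1+c)^{|S|}-1$ bound from factors that are $\ge 1$ (respectively $\ge 1+c$). One minor aside: disjoint parameter blocks play no role in making $\varrho_j$ a function of $a^j$ alone. That follows from its definition as a ratio of per-agent marginals, so Step~1 needs only independence of the actions under the behavior policy.
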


The multiplicative form follows from the independence of the per-agent ratios, which causes the second moments to factorize; the full derivation is given in Appendix~\ref{app:proof:var}.
\begin{remark}
At $\bm\theta=\bm\theta_{\mathrm{old}}$, every $\chi^2_j=0$ and $w_i\equiv1$, so the ratio only activates once $\bm\theta$ departs from $\bm\theta_{\mathrm{old}}$, i.e., across PPO's inner epochs. This is precisely why the two supports coincide on-policy (Thm.~\ref{thm:canon}) yet diverge off-policy (Prop.~\ref{prop:dom}).
\end{remark}

\subsection{Main results}\label{sec:mainresults}

\subsubsection{The expected gradient factorizes through $\tS=\SR\SA$}

\begin{theorem}[Support factorization]\label{thm:canon}
Under Assumptions~\ref{as:indep}--\ref{as:sym}, at $\bm\theta=\bm\theta_{\mathrm{old}}$
the expected gradient \eqref{eq:grad} is
\begin{equation}
\bm g_m=\sum_{j:\,C_{jm}=1}\tS_{mj}\,\mathbb E\big[r_j\,\nabla_{\bm\theta^m}
\log\pi^m(a^m)\big],\qquad \tS=\SR\SA ,
\label{eq:tildeS}
\end{equation}
where $\tS$ is the matrix product of $\SR$ and $\SA$. Thus $\bm g_m$
depends on $(\SA,\SR)$ only through $\tS$: any two support pairs with the same
product $\SR\SA$ induce the same expected gradient.
\end{theorem}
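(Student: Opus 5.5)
We evaluate \eqref{eq:grad} at $\bm\theta=\bm\theta_{\mathrm{old}}$. There every $\varrho_j\equiv1$, so $w_i\equiv1$ pointwise; note this is stronger than the mean statement of Lemma~\ref{lem:mean}. The gradient then becomes
\[
\bm g_m=\Eold\Big[\textstyle\sum_{i:\SR_{im}=1}A_i\,\nabla_{\bm\theta^m}\log\pi^m(a^m)\Big].
\]
Next we substitute $A_i=\sum_{j}\SA_{ij}r_j-b_i$. The baseline term drops out: $b_i$ does not depend on $\bm a$, and the score has zero mean,
\[
\mathbb E[\nabla_{\bm\theta^m}\log\pi^m(a^m)]=\sum_a\nabla_{\bm\theta^m}\pi^m(a)=\nabla_{\bm\theta^m}1=0 .
\]
The sum is finite by full support (Assumption~\ref{as:supp}) and the finite action set.

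Exchanging the finite sums then gives
\[
\bm g_m=\sum_j\Big(\sum_i\SR_{im}\SA_{ij}\Big)\mathbb E[r_j\nabla_{\bm\theta^m}\log\pi^m].
\]
By symmetry of $\SR$, $\SR_{im}=\SR_{mi}$, so the inner coefficient is $(\SR\SA)_{mj}=\tS_{mj}$. This is the only place symmetry of the supports enters. Without it, the coefficient would read $((\SR)^\top\SA)_{mj}$, which is the natural directed generalization.

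It remains to restrict the outer sum to $j$ with $C_{jm}=1$. Suppose $C_{jm}=0$. Then $r_j$ does not depend on $a^m$ (Assumption~\ref{as:sym}), so $r_j=r_j(\bm a^{-m})$. By Assumption~\ref{as:indep}, $a^m$ is independent of $\bm a^{-m}$ under $\pi_{\bm\theta_{\mathrm{old}}}$. The expectation therefore factorizes:
\[
\mathbb E[r_j\nabla\log\pi^m]=\mathbb E[r_j]\,\mathbb E[\nabla\log\pi^m]=0 .
\]
Dropping these terms yields \eqref{eq:tildeS}. The final claim is immediate, since $(\SA,\SR)$ appear in the result only through $\tS$.

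The main obstacle is minor and consists of two bookkeeping points. First, the index orientation: since $\SR_{im}$ appears rather than $\SR_{mi}$, symmetry is needed to write the product as $\SR\SA$ and not its transpose. Second, the independence step: it relies on the factorization holding exactly at $\bm\theta_{\mathrm{old}}$, and on $r_j$ being a genuine function of $\bm a^{-m}$ whenever $C_{jm}=0$.
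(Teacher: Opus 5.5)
Your proposal is correct and follows the paper's proof essentially step for step. You set $w_i\equiv 1$ at $\bm\theta_{\mathrm{old}}$, drop the baseline via the zero-mean score, discard the $C_{jm}=0$ terms by factorizing $\mathbb E[r_j s^m]$ over the independent $a^m$ and $\bm a^{-m}$, and use $\SR_{im}=\SR_{mi}$ to identify the coefficient as $(\SR\SA)_{mj}$. The only difference is that you exchange the sums before rather than after dropping the uncoupled terms, which is immaterial, and your aside that a non-symmetric $\SR$ would give $(\SR)^{\top}\SA$ is correct.
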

The key step evaluates the surrogate gradient at $\bm\theta_{\mathrm{old}}$, where $w_i{=}1$, and collects terms by the score-function identity; the complete proof is in Appendix~\ref{app:proof:canon}.

\subsubsection{Canonical form and redundancy of cross-agent ratios}

\begin{corollary}[Per-agent ratio canonical form]\label{cor:canon}
Under Theorem~\ref{thm:canon}, every estimator with supports
$(\SA,\SR)$ has, at the on-policy point, the same expected gradient as the
\emph{per-agent-ratio} estimator $(\SR{=}I,\ \SA{=}\tS)$ whose advantage is the
reweighted quantity $A'_m=\sum_j\tS_{mj}r_j$ with $\tS=\SR\SA$. In particular the
joint ratio $\SR=\one\one^\top$ paired with advantage $\SA$ is gradient-equivalent
to a per-agent ratio with advantage reweighted by $\one\one^\top\SA$. Hence
\emph{cross-agent importance ratios add no expected-gradient signal beyond a
linear reweighting of the advantage.}
\end{corollary}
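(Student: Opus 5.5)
The corollary follows from Theorem~\ref{thm:canon} applied twice. The idea is to compare two support pairs that share the same product: the original pair $(\SA,\SR)$ and the per-agent-ratio pair $(\SA{=}\tS,\SR{=}I)$.

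One technicality has to be handled first. $\tS=\SR\SA$ is in general a nonnegative \emph{integer} matrix, not a $0/1$ matrix, so the pair $(\tS,I)$ lies outside the literal definition in \eqref{eq:supports}. I will therefore use the reweighted advantage $A'_m=\sum_j \tS_{mj} r_j - b'_m$, with $b'_m$ an action-independent baseline. I will also check that the proof of Theorem~\ref{thm:canon} never uses the $0/1$ property of $\SA$. That proof expands $A_i=\sum_j \SA_{ij} r_j - b_i$ linearly, and since $w_i\equiv1$ at $\bm\theta_{\mathrm{old}}$ by Lemma~\ref{lem:mean}, the argument only needs linearity in $\SA$ and the score-function identity. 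It therefore extends verbatim to real-valued weight matrices in the advantage slot.

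Concretely, I will redo the computation directly rather than just cite the theorem, so the extension is transparent. Set $\bm\theta=\bm\theta_{\mathrm{old}}$ in \eqref{eq:grad}, so every $w_i=1$. This gives
\[
\bm g_m=\mathbb E\Big[\textstyle\sum_{i}\SR_{im}\big(\sum_j \SA_{ij}r_j-b_i\big)\nabla_{\bm\theta^m}\log\pi^m(a^m)\Big].
\]
The baseline terms vanish because $\mathbb E[\nabla_{\bm\theta^m}\log\pi^m]=0$. Using symmetry of $\SR$ (so $\SR_{im}=\SR_{mi}$), the coefficient of $r_j$ is $\sum_i \SR_{mi}\SA_{ij}=\tS_{mj}$. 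Next, for $j$ with $C_{jm}=0$, the reward $r_j$ does not depend on $a^m$. By Assumption~\ref{as:indep}, the score $\nabla_{\bm\theta^m}\log\pi^m(a^m)$ depends only on $a^m$, which is independent of the other actions, so $\mathbb E[r_j\nabla\log\pi^m]=\mathbb E[r_j]\,\mathbb E[\nabla\log\pi^m]=0$. This recovers \eqref{eq:tildeS}.

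The same computation with $\SR=I$ and advantage weights $\tS$ gives coefficient $I\cdot\tS=\tS$, so the two gradients coincide. The joint-ratio case is then immediate: substitute $\SR=\one\one^\top$ to get $\tS=\one\one^\top\SA$, whose entry $\tS_{mj}=\sum_i\SA_{ij}$ is the column count.

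The final sentence of the corollary needs no further proof. Whatever the choice of $\SR$, the expected gradient equals that of a per-agent ratio with a linearly reweighted advantage, so a cross-agent ratio contributes only the linear map $\SA\mapsto\SR\SA$.

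The main obstacle is the integrality issue: $\tS$ need not be $0/1$, and may have entries larger than one. I will address it by stating explicitly that the equivalent estimator uses weighted rather than indicator supports. I will also remark that if only the $0/1$ pattern of $\tS$ mattered, the equivalence would hold only up to positive per-pair rescaling, which is why the corollary says \emph{reweighted}.
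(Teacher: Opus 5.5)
Your proposal is correct and follows the paper's route: the pair $(\SA{=}\tS,\SR{=}I)$ has the same product $\tS$, so Theorem~\ref{thm:canon} yields the same expected gradient, since its proof is linear in the advantage weights and uses symmetry only of $\SR$. Your explicit treatment of the integer-weighted (and possibly asymmetric) advantage is a more careful version of the paper's brief remark that such a reward sum is ``admissible.'' One small correction: $w_i\equiv 1$ at $\bm\theta_{\mathrm{old}}$ holds pathwise because every $\varrho_j=1$ there; Lemma~\ref{lem:mean} only gives $\mathbb E[w_i]=1$, so it is not the right citation.
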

This is immediate from Theorem~\ref{thm:canon} by reading off the coefficient of each score term (Appendix~\ref{app:proof:corcanon}).

\begin{remark}[The team-advantage scalar]\label{rem:scalar}
With team advantage $\SA=\one\one^\top$, per-agent ratio gives $\tS=\one\one^\top$
while joint ratio gives $\tS=\one\one^\top\one\one^\top=n\,\one\one^\top$. The two
expected gradients are therefore \emph{identical up to the global scalar $n$}:
the joint ratio merely rescales the effective step by the support size. Any fair
comparison must control for this scalar (we do so in Sec.~\ref{sec:exp} by
matching effective step size).
\end{remark}

\subsubsection{Variance Domination}

We now compare the two canonical realizations of a target product $\tS=C$:
\begin{equation}\label{eq:pq}
\text{(P) advantage path: }\SA=C,\ \SR=I;\qquad
\text{(Q) ratio path: }\SA=I,\ \SR=C .
\end{equation}
Both realize the same expected gradient $\sum_{j\in\partial m}\mathbb E[r_j\nabla\log\pi^m]$ by Theorem~\ref{thm:canon}.

\begin{proposition}[Off-policy variance domination]\label{prop:dom}
Under Assumptions~\ref{as:indep}--\ref{as:bdd}: (i) at $\bm\theta=\bm\theta_{\mathrm{old}}$, P and Q have identical mean and identical variance; (ii) for $\bm\theta\ne\bm\theta_{\mathrm{old}}$, the P-weight $\sum_{j\in\partial m}r_j$ is bounded by $|\partial m|\,r_{\max}$ with variance independent of $\bm\theta$, whereas the Q-weight $\sum_{i\in\partial m}\bigl(\prod_{j\in\partial i}\varrho_j\bigr)r_i$ has second moment $\mathbb E[(\prod_{j\in\partial i}\varrho_j)^2]=\prod_{j\in\partial i}(1+\chi^2_j)$, so once any $\chi^2_j>0$ its variance is bounded below by a positive multiple of $\prod_{j\in\partial m}(1+\chi^2_j)-1$. Hence $\Var(\bm g^Q_m)/\Var(\bm g^P_m)\to\infty$ as the policy step or $|\partial m|$ grows.
\end{proposition}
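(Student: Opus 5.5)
Fix an agent $m$ and write the two per-sample gradient contributions in the form of \eqref{eq:grad}. With $\SR=I$ (path P), the sum over $i$ with $\SR_{im}=1$ collapses to $i=m$, so $\bm g^P_m=\Eold[\varrho_m\,A^P_m\,\nabla_{\bm\theta^m}\log\pi^m(a^m)]$ with $A^P_m=\sum_{j\in\partial m}r_j-b_m$. With $\SA=I$, $\SR=C$ (path Q), $\bm g^Q_m=\Eold[(\sum_{i\in\partial m}w_i r_i)\nabla_{\bm\theta^m}\log\pi^m]$ with $w_i=\prod_{j\in\partial i}\varrho_j$. I will treat the scalar multiplying the score as the object whose distribution is compared: the ``P-weight'' $\sum_{j\in\partial m}r_j$ (times the single factor $\varrho_m$, which I will handle as in part (i)) and the ``Q-weight'' $\sum_{i\in\partial m}w_i r_i$.

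\emph{Part (i).} At $\bm\theta=\bm\theta_{\mathrm{old}}$ every $\varrho_j\equiv1$, so $w_i\equiv1$ and the Q-weight becomes $\sum_{i\in\partial m}r_i$. Using the symmetry of $C$ (Assumption~\ref{as:sym}), $\{i:C_{im}=1\}=\partial m$, so this is exactly the P-weight. The two single-sample estimators therefore coincide pointwise as random variables. Equality of means (already implied by Theorem~\ref{thm:canon}) and equality of variances follow at once.

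\emph{Part (ii), P side.} The P-weight $\sum_{j\in\partial m}r_j$ is a function of $\bm a$ alone and does not involve $\bm\theta$. By Assumption~\ref{as:bdd} it satisfies $|\cdot|\le|\partial m|\,r_{\max}$, so its variance under the behavior policy is at most $|\partial m|^2r_{\max}^2$, independent of $\bm\theta$. The only $\bm\theta$-dependence in $\bm g^P_m$ comes from the single factor $\varrho_m$, which contributes the bounded factor $1+\chi^2_m$ to the second moment. I will state this explicitly as the per-agent baseline.

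\emph{Part (ii), Q side.} The second-moment identity $\mathbb E[w_i^2]=\prod_{j\in\partial i}(1+\chi^2_j)$ is exactly Lemma~\ref{lem:var}, together with Lemma~\ref{lem:mean} for $\mathbb E[w_i]=1$. For the lower bound on the variance of $\sum_{i\in\partial m}w_ir_i$, I will first take the term $i=m$, since $w_m$ contains every factor $\varrho_j$ with $j\in\partial m$. I then need to prevent cancellation between this term and the other terms $w_ir_i$, and this is the main obstacle.

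My plan is to condition on the rewards being nondegenerate, assuming $|r_m|\ge r_{\min}>0$ on a set of positive behavior probability, or more simply adding this as a nondegeneracy hypothesis. I will then expand $\mathbb E[(\sum_i w_ir_i)^2]=\sum_{i,k}\mathbb E[w_iw_kr_ir_k]$. If each $r_i\ge0$ after the baseline shift, or more generally if the cross terms are nonnegative, the variance is at least $r_{\min}^2(\mathbb E[w_m^2]-1)$ up to the bounded mean correction. If cross terms can be negative, I will instead use the factorization across agents: conditioning on $a^{\partial m}$ and using independence, I will extract a positive constant $c$, depending on $r$ and $\pi_{\mathrm{old}}$ but not on $\bm\theta$, with $\Var\ge c\,(\prod_{j\in\partial m}(1+\chi^2_j)-1)$.

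Finally, the ratio $\Var(\bm g^Q_m)/\Var(\bm g^P_m)$ has a denominator bounded uniformly in $\bm\theta$ and polynomial in $|\partial m|$. Its numerator grows like $(1+c')^{|\partial m|}$, or diverges as the $\chi^2_j$ increase with the step, so the ratio tends to infinity.
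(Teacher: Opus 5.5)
Your outline follows the paper's own argument. The two weights coincide pathwise at $\bm\theta_{\mathrm{old}}$; the P-weight is $\bm\theta$-free and bounded by $|\partial m|\,r_{\max}$; Lemma~\ref{lem:var} gives $\mathbb E[w_i^2]$; and the Q-side lower bound is taken from one summand. You are right that cross-term cancellation is the crux, and the paper's proof does not resolve it either. It bounds a single summand and passes to the whole sum without comment. It also pulls $r_i^2$ out of $\mathbb E[w_i^2r_i^2]$ via an asserted independence of $w_i=\prod_{j\in\partial i}\varrho_j$ from $r_i$, which fails because both are functions of $a^{\partial i}$.

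Your nondegeneracy hypothesis, as stated, has the same defect. Requiring $|r_m|\ge r_{\min}$ only on a set of positive probability does not give $\mathbb E[w_m^2r_m^2]\ge r_{\min}^2\,\mathbb E[w_m^2]$: $\pi_{\bm\theta}$ can shift its mass onto the profiles $a^{\partial m}$ where $r_m$ vanishes, making $\mathbb E[w_m^2]$ large while $\mathbb E[w_m^2r_m^2]$ stays small. The step needs $|r_m|\ge r_{\min}$ pointwise.

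The real gap is your mixed-sign fallback. No $\bm\theta$-independent $c>0$ can come out of conditioning on $a^{\partial m}$, because the claimed lower bound fails under Assumptions~\ref{as:indep}--\ref{as:bdd} alone. Take $n=2$, $C=\one\one^\top$, $r_1$ depending on both actions, and $r_2=-r_1$. Then the Q-weight for $m=1$ is $\varrho_1\varrho_2(r_1+r_2)\equiv0$, and the P-weight vanishes too, for every $\bm\theta$. An anti-cancellation hypothesis must therefore be added, and your same-sign case is the right one: if $r_i\ge0$ for all $i\in\partial m$ and $r_m\ge r_{\min}>0$ pointwise, then $\sum_{i\in\partial m}w_ir_i\ge r_{\min}w_m$ pointwise.

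Second, your claim that the denominator is bounded uniformly in $\bm\theta$ contradicts your own correct remark that the P-estimator carries $\varrho_m$. Its second moment $1+\chi^2_m$ grows with the step. The paper makes the same slip by dropping $\varrho_m$ from the P-weight. Compare second moments of the full estimators instead. Let $s^m=\nabla_{\bm\theta^m}\log\pi^m(a^m)$; it depends only on $a^m$ and is therefore independent of $a^{\partial m\setminus\{m\}}$. The pointwise bound above then gives
\[
\mathbb E\Big[\Big({\textstyle\sum_{i\in\partial m}}w_ir_i\Big)^2\|s^m\|^2\Big]\ \ge\ r_{\min}^2\,\mathbb E\big[\varrho_m^2\|s^m\|^2\big]\prod_{j\in\partial m\setminus\{m\}}(1+\chi^2_j),
\]
while
\[
\mathbb E\Big[\varrho_m^2\Big({\textstyle\sum_{j\in\partial m}}r_j\Big)^2\|s^m\|^2\Big]\ \le\ |\partial m|^2r_{\max}^2\,\mathbb E\big[\varrho_m^2\|s^m\|^2\big],
\]
so the common factor $\mathbb E[\varrho_m^2\|s^m\|^2]$ cancels. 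Both means are bounded, since each is an expectation under a product measure with some marginals switched to $\pi_{\bm\theta}$. Hence $\Var(\bm g^Q_m)/\Var(\bm g^P_m)$ is, up to an additive bounded correction, at least $(r_{\min}/r_{\max})^2\,|\partial m|^{-2}\prod_{j\in\partial m\setminus\{m\}}(1+\chi^2_j)$.

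This makes explicit the passage from weights to gradient estimators that both you and the paper leave implicit. It also shows that the $j=m$ factor cancels in the ratio. Finally, for a fixed full-support $\pi_{\mathrm{old}}$ on finitely many actions, $\chi^2_j\le 1/\min_a\pi^j_{\mathrm{old}}(a)-1$. The divergence is therefore driven by $|\partial m|\to\infty$, not by the step size alone.
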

The bound follows by comparing the bounded P-weight to the Q-weight's exploding second moment from Lemma~\ref{lem:var}; see Appendix~\ref{app:proof:dom}.

\begin{remark}[Clipping widens the gap]\label{rem:clip}
Reinstating the PPO clip applies a per-weight, $1$-Lipschitz truncation. For P, the clipped object is a single low-variance ratio $\varrho_m$, so clipping rarely activates; for Q, it is the heavy-tailed product (Lemma~\ref{lem:var}), which triggers clipping more often, injecting truncation bias and leaving the retained variance elevated. Clipping thus cannot reverse Proposition~\ref{prop:dom} and typically amplifies it.
\end{remark}

\begin{corollary}[Design rule]\label{cor:design}
Among all $(\SA,\SR)$ that realize a target unbiased product $\tS$, the choice $\SR=I$ (per-agent ratio) with $\SA=\tS$ minimizes the gradient-estimator variance. In particular, the joint (compound) ratio is weakly dominated on-policy and strictly dominated off-policy: it should be replaced by a per-agent ratio with the corresponding advantage reweighting.
\end{corollary}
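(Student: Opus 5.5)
The plan is to fix a target product $\tS$ and compare, for each agent $m$, the scalar weight multiplying the score $\nabla_{\bm\theta^m}\log\pi^m(a^m)$ in \eqref{eq:grad} across all admissible pairs $(\SA,\SR)$ with $\SR\SA=\tS$. Write this weight as $W_m=\sum_{i:\SR_{im}=1}w_iA_i$. For the canonical choice $\SR=I$, $\SA=\tS$, it reduces to $W^P_m=\varrho_m A'_m$ with $A'_m=\sum_j\tS_{mj}r_j-b_m$. By Theorem~\ref{thm:canon}, every pair realizing $\tS$ has the same mean gradient $\bm g_m$ at $\bm\theta_{\mathrm{old}}$. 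The variance comparison therefore reduces to a comparison of second moments $\mathbb E[\|W_m\nabla\log\pi^m\|^2]$, and I will argue it separately on-policy and off-policy.

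\emph{On-policy step.} At $\bm\theta=\bm\theta_{\mathrm{old}}$ every $w_i\equiv1$, so for any pair $W_m=\sum_{i:\SR_{im}=1}A_i=\sum_j(\SR\SA)_{mj}r_j-\sum_{i:\SR_{im}=1}b_i$, using symmetry of $\SR$. This equals $A'_m$ up to the action-independent constant. With matched baselines the single-sample estimators coincide pointwise. Hence the variances are equal, which gives the ``weakly dominated on-policy'' clause and mirrors Proposition~\ref{prop:dom}(i).

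\emph{Off-policy step.} For $\bm\theta\neq\bm\theta_{\mathrm{old}}$ I would decompose a general $W_m$ as $\sum_i\bigl(\prod_{j:\SR_{ij}=1}\varrho_j\bigr)A_i$, with each product containing $\varrho_m$ whenever $\SR_{im}=1$. Factoring out $\varrho_m$ leaves $\varrho_m\sum_i U_iA_i$, where $U_i=\prod_{j\neq m,\SR_{ij}=1}\varrho_j$ is independent of $a^m$ by Assumption~\ref{as:indep} and satisfies $\mathbb E[U_i]=1$ by Lemma~\ref{lem:mean}. The P-estimator is the special case $U_i\equiv1$. The key tool is a conditional-variance (Jensen) argument: conditioning on the actions that the advantages depend on, the extra factors $U_i$ act as independent mean-one multiplicative noise. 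This noise can only increase the second moment, with excess governed by $\prod(1+\chi^2_j)-1$ from Lemma~\ref{lem:var}. The excess is strictly positive once some $\chi^2_j>0$ with $j\neq m$ in the support, which gives strict domination. Proposition~\ref{prop:dom} already provides this for the extreme pair $(P,Q)$, so I would present the general case as an interpolation between the two.

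\emph{Main obstacle.} The factors $U_i$ are not independent of the rewards $r_j$, because rewards depend on the same neighbors' actions. The naive ``independent noise'' argument therefore fails, and the claim may only hold as stated under extra conditions. My first attempt would be to bound the second moment from below via Cauchy--Schwarz and the factorization $\mathbb E[\prod\varrho_j^2]=\prod(1+\chi^2_j)$, combined with the bounded reward of Assumption~\ref{as:bdd}. This should show the ratio-path variance exceeds the P-path variance (which is independent of $\bm\theta$ beyond $\varrho_m$) at least for small-but-nonzero steps, and asymptotically as $\chi^2$ or $|S|$ grows. If a fully general pointwise ordering cannot be proven, I would state the corollary in this asymptotic/regime form, consistent with Proposition~\ref{prop:dom}.
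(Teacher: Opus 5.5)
\textbf{The gap is the off-policy half of the statement.} Your on-policy step is correct, and it is the only part that holds unconditionally: with $w_i\equiv1$ and the baseline matched to $\sum_i\SR_{im}b_i$, every pair with product $\tS$ coincides pathwise with $(\SR,\SA)=(I,\tS)$.

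Off-policy, the obstacle you flag is not a technicality that Cauchy--Schwarz can patch. Without an extra hypothesis the claim is false. There is also a second problem: Theorem~\ref{thm:canon} equates means only at $\bm\theta_{\mathrm{old}}$. Off-policy, $\Eold[\varrho_mU_iA_is^m]\neq\Eold[\varrho_mA_is^m]$ in general, so your reduction to second moments is not available there.

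Here is a counterexample. Take $n=2$, $K=2$, and
$r_1=\tfrac12g(a^2)+u(a^1)$, $r_2=\tfrac12g(a^2)-u(a^1)$,
with $g(a)=\mathbf 1[a=1]$ and $u$ nonconstant, so $C=\one\one^\top=\tS$. Let $\pi^2_{\mathrm{old}}=(\tfrac12,\tfrac12)$ and $\pi^2_{\bm\theta}=(\tfrac12-\epsilon,\tfrac12+\epsilon)$.
\begin{itemize}
\item For $m=1$, the advantage path has weight $\varrho_1g(a^2)$. The joint-ratio path ($\SA=I$, $\SR=\one\one^\top$) has weight $\varrho_1\varrho_2g(a^2)$.
\item Both estimators have mean zero. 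The score in \eqref{eq:grad} is taken at $\bm\theta$, so $\Eold[\varrho_1s^1]=\mathbb E_{\pi^1_{\bm\theta}}[s^1]=0$, and $a^1,a^2$ are independent.
\item Their second moments are $\Eold[\varrho_1^2\|s^1\|^2]$ (positive for a nondegenerate softmax) times $\Eold[g^2]=\tfrac12$ and times $\Eold[\varrho_2^2g^2]=\tfrac12(1-2\epsilon)^2$, respectively.
\end{itemize}
So the joint ratio has strictly \emph{smaller} variance for every $\epsilon\in(0,\tfrac12)$.

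The same construction defeats both of your fallbacks.
\begin{itemize}
\item \emph{Small steps.} It works for arbitrarily small steps. The variance gap has a first-order term $2\,\mathrm{Cov}(W^P_ms^m,D_ms^m)$ with $D_m=\varrho_m\sum_i\SR_{im}(U_i-1)A_i$, and its sign is unrestricted; here it equals $-2\epsilon\,\Eold[\varrho_1^2\|s^1\|^2]$.
\item \emph{Large steps.} With $\pi^2_{\mathrm{old}}=(\eta,1-\eta)$, $\pi^2_{\bm\theta}=(\tfrac12,\tfrac12)$ and $g=\mathbf 1[a=2]$, the variance ratio tends to $\tfrac14$ while $\chi^2_2\to\infty$ as $\eta\to0$.
\end{itemize}

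Leaning on Proposition~\ref{prop:dom} does not close the gap. The paper gives no separate proof and reads this corollary off Theorem~\ref{thm:canon} and Proposition~\ref{prop:dom}. That proof has three weaknesses:
\begin{itemize}
\item it obtains its lower bound by invoking ``independence of the ratios from $r_i$'', which is exactly the step you rightly distrust;
\item it treats the P-weight as $\bm\theta$-independent, dropping the $\varrho_m$ factor you correctly kept;
\item it compares only the extreme pair, not all pairs with product $\tS$.
\end{itemize}

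\textbf{What you can actually prove.} Your decomposition $W_m=\varrho_m\sum_i\SR_{im}U_iA_i$, with $U_i$ independent of $a^m$ and $\Eold[U_i]=1$, supports two restated versions.
\begin{itemize}
\item \emph{Conditional version.} If you assume the $U_i$ are jointly independent of $(a^m,\{A_i\})$, the means coincide. Jensen applied to $\sum_i\SR_{im}U_iA_i$, conditional on $(a^m,\{A_i\})$, then gives the weak ordering, strict when some such $U_i$ is nondegenerate. However, this assumption fails exactly for the coupled agents the corollary is about.
\item \emph{Worst case over rewards bounded by $r_{\max}$} (zero baselines). The advantage path's variance never exceeds $(\sum_j\tS_{mj})^2r_{\max}^2\Eold[\varrho_m^2\|s^m\|^2]$. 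Now evaluate the ratio path at $r_j\equiv r_{\max}$, perturbed slightly if the coupling pattern must be respected. It has mean zero, and by Jensen and symmetry of $\SR$, $\Eold[(\sum_i\SR_{im}U_i\sum_j\SA_{ij})^2]\ge(\sum_j\tS_{mj})^2$. So its variance attains at least that bound.
\end{itemize}
Either way, the corollary must be restated with such a hypothesis. In its current form (``minimizes the variance'', ``strictly dominated off-policy'') it cannot be proved.
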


\subsubsection{The Advantage Support: A Bias--Variance Tradeoff}\label{sec:advtradeoff}


At a matched effective support, the ratio support does not provide any bias reduction advantage over the advantage support: any bias it can remove is already removed by the advantage support at strictly lower variance (Corollary~\ref{cor:canon}, Prop.~\ref{prop:dom}). Consequently, its variance-optimal setting is the per-agent boundary $\SR=I$. The advantage support is qualitatively different: with the ratio fixed at per-agent, it trades bias against variance solely through $\SA$, and this tradeoff has an interior optimum. This asymmetry between the two supports is fundamental, and we can characterize the advantage side as sharply as the ratio side.

\begin{proposition}[Advantage-support bias--variance tradeoff]\label{prop:adv}
Fix the per-agent ratio $\SR=I$ and consider the on-policy estimator $\bm g_m$ with advantage support $\SA$. Write the true (team-return) gradient block as $\bm g_m^\star=\sum_{j\in\partial m}\mathbb E[r_j\nabla_{\bm\theta^m}\log\pi^m]$. Then, under Assumptions~\ref{as:indep}--\ref{as:bdd}:
\begin{enumerate}[label=(\roman*),leftmargin=1.9em,itemsep=1pt,topsep=2pt]
\item (Bias) $\mathbb E[\bm g_m]=\sum_{j\in\partial m}\SA_{mj}\, \mathbb E[r_j\nabla_{\bm\theta^m}\log\pi^m]$, so $\bm g_m$ is unbiased iff $\SA_{mj}=1$ for every coupled agent $j\in\partial m$; a support that misses a coupled agent omits its term and is biased.
\item (Variance) Including an uncoupled agent $j\notin\partial m$ leaves the mean unchanged (its term has zero expectation by the score-function identity) but adds a nonnegative variance contribution, strictly positive whenever $r_j$ is conditionally non-degenerate given $a^m$.
\end{enumerate}
Consequently, the mean-squared-error-optimal advantage support is exactly the coupling neighborhood $\partial m$: smaller supports are biased, larger supports inflate variance.
\end{proposition}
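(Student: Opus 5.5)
The plan is to specialize Theorem~\ref{thm:canon} to $\SR=I$, which gives $\tS=\SA$, and then handle the variance by an explicit orthogonal decomposition of the per-agent score-weighted estimator. With $\SR=I$ and on-policy weights $w_i\equiv 1$, the single-sample estimator for block $m$ is
\[
\hat{\bm g}_m=\Big(\sum_{j}\SA_{mj}r_j-b_m\Big)\nabla_{\bm\theta^m}\log\pi^m(a^m).
\]

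\textbf{Part (i).} Take expectations term by term. The baseline term vanishes because $\mathbb E[\nabla\log\pi^m(a^m)]=0$ under Assumption~\ref{as:supp}. For $j\notin\partial m$, Assumption~\ref{as:sym} says $r_j$ does not depend on $a^m$. By Assumption~\ref{as:indep} we can then condition on $a^{-m}$ and factor:
\[
\mathbb E[r_j\nabla\log\pi^m]=\mathbb E_{a^{-m}}\big[r_j\,\mathbb E_{a^m}[\nabla\log\pi^m]\big]=0 .
\]
Hence $\mathbb E[\hat{\bm g}_m]=\sum_{j\in\partial m}\SA_{mj}\mathbb E[r_j\nabla\log\pi^m]$, which matches \eqref{eq:tildeS}. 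Writing $\bm c_j:=\mathbb E[r_j\nabla\log\pi^m]$, the bias equals $-\sum_{j\in\partial m}(1-\SA_{mj})\bm c_j$. So if $\SA_{mj}=1$ on $\partial m$, the estimator is unbiased. For the converse ("iff") I will need to read "biased" generically: a missed $j$ contributes a nonzero term whenever $\bm c_j\ne 0$, and those terms do not cancel. I will state this nondegeneracy explicitly, since otherwise it is false in general (for example, when $r_j$ is in fact constant in $a^m$).

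\textbf{Part (ii).} This is the main step. Let $\hat{\bm g}_m^{+}=\hat{\bm g}_m+r_j\nabla\log\pi^m$ for some uncoupled $j$. The mean is unchanged by part (i). For the variance (trace of the covariance), expand:
\[
\Var(\hat{\bm g}^+_m)=\Var(\hat{\bm g}_m)+\mathbb E\big[r_j^2\|\nabla\log\pi^m\|^2\big]+2\,\mathbb E\big[r_j A_m\|\nabla\log\pi^m\|^2\big],
\]
where the squared-mean terms are unchanged. The obstacle is the cross term. Conditioning on $a^m$ and using independence, I will write $r_j=\bar r_j+\delta_j$, where $\bar r_j=\mathbb E[r_j]$ (a constant, because $r_j$ does not depend on $a^m$) and $\delta_j$ is a function of $a^{-m}$ only.

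First I will absorb $\bar r_j$ into the baseline $b_m$, which the paper allows since it is any control variate independent of $\bm a$. The added contribution is then $\mathbb E[\delta_j^2]\,\mathbb E\|\nabla\log\pi^m\|^2\ge 0$, plus a cross term $2\,\mathbb E[\delta_j A_m\|\nabla\log\pi^m\|^2]$. I will try to show this cross term is nonnegative or zero by conditioning on $a^m$: it becomes $\mathbb E_{a^m}\big[\|\nabla\log\pi^m\|^2\,\mathrm{Cov}(\delta_j,A_m\mid a^m)\big]$.

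This covariance need not vanish if $r_j$ and some $r_k$ with $k\in\partial m$ share an action $a^l$. To handle it I will add an explicit condition. One option is that $r_j$ is conditionally uncorrelated with the retained advantage given $a^m$. The other is a mild non-anticorrelation condition, under which the strictly positive term $\mathbb E[\mathrm{Var}(r_j\mid a^m)\|\nabla\log\pi^m\|^2]$ dominates. I will record this as the precise meaning of "conditionally non-degenerate" in the statement.

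\textbf{Conclusion.} Combine the two parts via $\mathrm{MSE}=\|\text{bias}\|^2+\Var$. Any $\SA$ row missing part of $\partial m$ incurs bias. Adding uncoupled agents to $\partial m$ only adds variance. Therefore $\SA_{m\cdot}=\mathbb 1_{\partial m}$ minimizes MSE among unbiased supports, and is optimal outright once the missing-term bias exceeds the variance it saves, which is the interior-optimum claim. Assumption~\ref{as:bdd} guarantees all these moments are finite and that the variance grows at most linearly, $O(|\SA_{m\cdot}|^2r_{\max}^2)$.
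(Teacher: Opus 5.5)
Your plan follows the paper's proof step for step. You specialize Theorem~\ref{thm:canon} to $\SR=I$, read the bias off as minus the sum of the omitted couplings $\bm c_j=\mathbb E[r_j s^m]$ (with $s^m=\nabla_{\bm\theta^m}\log\pi^m(a^m)$), expand the variance increment from adding $r_j s^m$, and combine the two into the MSE. Where you diverge, you are the more careful one, and your extra hypotheses are necessary rather than cosmetic.

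The paper handles the cross term $2\,\mathrm{Cov}\bigl(r_j s^m,\sum_k \SA_{mk} r_k s^m\bigr)$ only by calling $\mathbb E[r_j^2\|s^m\|^2]$ its ``dominant'' part. It then declares the MSE minimized at $\partial m$ without weighing bias reduction against variance cost. Both steps fail in general:
\begin{itemize}
\item \textbf{Baseline shift.} With $b_m$ held fixed, a nonzero-mean $r_j$ shifts $\mathbb E[A_m\mid a^m]$ by a constant and can lower the variance exactly as a better baseline would. Your re-centering handles this. The clean phrasing is a comparison of variances minimized over constant baselines, since $b\mapsto b+\bar r_j$ is a bijection.
\item \textbf{Negative covariance.} Even after centering, the conditional covariance can be negative. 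On a path $m$--$k$--$l$--$j$ (so $j\notin\partial m$), take $r_k=g(a^m)+f(a^l)+\dots$ and $r_j=-f(a^l)+\dots$. Adding $j$ to $m$'s advantage cancels the $f(a^l)$ noise and strictly lowers the variance at unchanged mean. So (ii) and the final clause are false as literally stated.
\item \textbf{MSE claim.} Omitting a coupled agent with tiny $\bm c_j$ but large reward variance lowers the single-sample MSE. The MSE claim therefore holds only conditionally, as you state it, or for large batches, where the squared bias is fixed while the variance decays like $1/N$.
\end{itemize}

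\textbf{Presentation.} State your covariance condition as an explicit extra hypothesis rather than as the meaning of ``conditionally non-degenerate''; in the statement, that phrase only buys strict positivity of the diagonal term. When adding several uncoupled agents, impose the condition against the advantage already enlarged by the earlier ones. A convenient sufficient condition is that the actions on which $r_j$ depends are disjoint from those on which the retained rewards depend.

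\textbf{Minor slips.}
\begin{itemize}
\item In (i), omitted terms can cancel when two or more coupled agents are dropped. The nondegeneracy you need is $\sum_{j\in\partial m:\,\SA_{mj}=0}\bm c_j\neq0$.
\item Your example ``$r_j$ constant in $a^m$'' cannot occur for $j\in\partial m$ under Assumption~\ref{as:sym}. The real failure is $\bm c_j=0$ despite dependence, e.g.\ when $\mathbb E[r_j\mid a^m]$ is constant in $a^m$.
\item $\bar r_j$ should be described as $\mathbb E[r_j\mid a^m]=\mathbb E[r_j]$.
\item $O(|\SA_{m\cdot}|^2 r_{\max}^2)$ is quadratic, not linear. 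Linear growth requires the aggregated rewards to be uncorrelated.
\end{itemize}
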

The bias and variance terms are computed separately and their sum minimized; the derivation is in Appendix~\ref{app:proof:adv}.


\begin{remark}[The asymmetry, precisely]\label{rem:asym}
Propositions~\ref{prop:dom} and~\ref{prop:adv} together establish the central asymmetry of the paper as a theoretical result, not merely an empirical observation. With the ratio fixed to per-agent, the advantage support has a bias term that larger supports remove, yielding an interior MSE optimum at the coupling neighborhood. In contrast, the ratio support offers no bias-reduction benefit: by the canonical form (Corollary~\ref{cor:canon}), any bias reduction it could achieve is equally attainable through the advantage, and by Proposition~\ref{prop:dom}, at strictly lower variance. Hence, on the variance-optimal frontier, the ratio is never used to reduce bias, and its optimum is the boundary $\SR=I$. Although the two knobs appear symmetric in the surrogate \eqref{eq:supports}, they play categorically different roles.
\end{remark}

\begin{remark}[Directional-coupling extension]\label{rem:dir}
If coupling is directional, replace $C$ by the directed influence matrix $D$ with $D_{ij}=1$ iff $r_i$ depends on $a^j$. Theorem~\ref{thm:canon} holds with the survival condition $D_{jm}=1$ and $\tS=\SR\SA$; the variance-optimal design uses $\SR=I$, $\SA=D^\top$.
\end{remark}

\begin{remark}[Finite-horizon extension]\label{rem:mdp}
In a finite-horizon Markov game, the argument applies per time step to the GAE advantage and the per-step ratio. The multiplicative variance of Lemma~\ref{lem:var} then compounds across agents and time, so the joint ratio's variance grows in $n\cdot H$; the on-policy factorization of Theorem~\ref{thm:canon} is unchanged. The episodic experiment (Sec.~\ref{sec:exp}) confirms the redundancy in this setting.
\end{remark}

\section{Experiments}\label{sec:exp}

\subsection{Experiment Setup}\label{sec:exp:setup}
The theory yields three testable predictions. \textbf{(R) Redundancy:} estimators with the same product $\tS=\SR\SA$ have identical expected gradients. \textbf{(D) Domination:} realizing a fixed $\tS$ through the ratio product incurs strictly higher variance off-policy than through the advantage sum, with the gap growing multiplicatively in the support size. (Fig. \ref{fig:motiv}) \textbf{(A) Asymmetry:} at the gradient-estimator level, the advantage support presents a bias--variance tradeoff with an interior MSE optimum at the coupling neighborhood (Prop.~\ref{prop:adv}), whereas the ratio support has no bias effect and a boundary optimum at $\rho_R=0$ (Prop.~\ref{prop:dom}). During training, the advantage radius influences the learned policy only insofar as bias alters the equilibrium---decisive when the externality is not self-internalized, mild otherwise---while the ratio radius, being bias-free, never changes the learned policy and affects only stability. We test these predictions through a multi-family training study that traces how each support affects the learned policy across four coupling structures and a scaling study that identifies the regime in which the ratio's variance penalty becomes manifest.

All environments are cooperative games with a prescribed coupling graph $C$, so the ground-truth coupling neighborhood is known exactly. Each agent $i$ observes a phase or context and selects $a^i\in\{1,\dots,K\}$; per-agent rewards $r_i$ are defined in Appendix~\ref{app:setup}, and the team return is $\R=\sum_i r_i$; ring neighborhoods of radius $\rho^\star$ define $\partial i$. We use four game families that span the regimes distinguished by the theory: a dense pairwise game (symmetric payoff over a ring neighborhood, with internalized externality); a directed dilemma, where a high-value action imposes a nuisance on downstream successors that the actor does not bear (a non-internalized externality); local congestion, where neighbors choosing the same resource split its value (crowding partly self-borne); and a block community game with all-to-all coupling inside disjoint blocks (a non-geometric graph). A more detailed description is provided in Appendix~\ref{app:setup}. In addition, we provide a real-world large-scale traffic signal control validation in Appendix~\ref{app:traffic:ratio}.

\subsection{Experiment Results}

We first examine PPO training across the four coupling families, which differ in how agents interact. Each is a finite-horizon Markov game with phase-conditioned tabular actors trained by PPO (clip $0.2$); we sweep the advantage radius $\rho_A$ (with per-agent ratio fixed) and, separately, the ratio radius $\rho_R$ (with team advantage fixed). Curves are converged over the final $40$ iterations and averaged over $5$ seeds.

\textbf{The advantage radius selects the outcome only for non-internalized externalities (Fig.~\ref{fig:advsweep}).} Figure~\ref{fig:advsweep} plots, for each of the four coupling families, the final team return as a function of the advantage radius $\rho_A$ (per-agent ratio fixed), with the true coupling radius $\rho^\star$ marked. The four families separate sharply. In the directed dilemma, the advantage radius is decisive: independent advantage ($\rho_A=0$) converges to the tragedy equilibrium (team return $-159$), while any $\rho_A \ge 1$ internalizes enough of the successor externality to reach the social optimum ($+158$). In the other three families, the effect is mild and the outcome-optimal radius is small ($\rho_A \in \{0,1\}$): when an agent already bears (part of) the coupling it induces—symmetric pairwise payoffs, shared congestion, within-block rewards—the independent advantage is nearly unbiased, so enlarging $\rho_A$ buys little signal and adds variance, and the return is flat or slightly decreasing in $\rho_A$. This is the training-level counterpart of the estimator tradeoff in Fig.~\ref{fig:motiv}a: the advantage radius should match the coupling neighborhood, but the outcome moves only when the un-internalized part of the externality is large enough to change the equilibrium. Full training curves for all four families are provided in Appendix~\ref{app:figs}.

\begin{figure}[t]
\centering
\includegraphics[width=\textwidth]{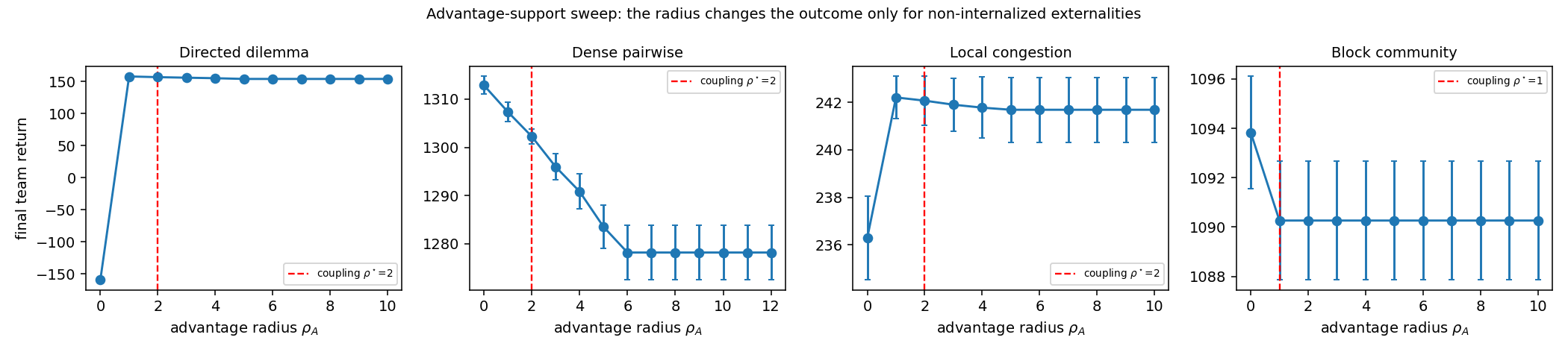}
\caption{Advantage-support sweep across the four coupling families: final team return versus the advantage radius $\rho_A$, with the ratio held per-agent. One panel per family; the dashed vertical line marks the coupling radius $\rho^\star$. Bands indicate $\pm1$ standard deviation across seeds; in the directed dilemma, the standard deviation is below $0.4$ and smaller than the marker, as the two equilibria are reached near-deterministically.}
\label{fig:advsweep}
\end{figure}

Moreover, redundancy implies that the ratio radius carries no benefit; Proposition~\ref{prop:dom} states that it carries a variance cost that grows multiplicatively with support size and with the per-update policy shift $\chi^2$. Whether that cost is visible in training therefore depends on how far off-policy each update travels—which a trust-region method controls through its step size. We make this explicit by scaling the number of agents under two configurations that differ only in aggressiveness: a standard PPO setting ($\mathrm{lr}=3\times10^{-3}$, batch $64$) and an off-policy stress setting ($\mathrm{lr}=0.15$, batch $16$); both use the team advantage and match the effective step across ratio supports (Remark~\ref{rem:scalar}), so any difference is due to variance, not step size.

\textbf{Masked under conservative updates, revealed under stress (Fig.~\ref{fig:scaling}).} Figure~\ref{fig:scaling} varies the number of agents $n$ (log axis) for each family and reports two quantities: the joint/per-agent final return ratio (solid, left axis)—indicating how much return the joint ratio sacrifices relative to a per-agent ratio—and the joint ratio's clip fraction (dashed, right axis), under both the standard (blue) and off-policy stress (red) configurations. Under the standard configuration, the joint and per-agent ratios are statistically indistinguishable up to $n=160$ (return ratio $1.00$, joint clip fraction $\approx 0$), with a $4$–$8\%$ gap emerging only at $n=320$: the trust region keeps $\chi^2$ so small ($\chi^2_j \approx 6\times10^{-4}$ per update) that the variance factor $(1+\chi^2)^n-1$ is negligible even for $n$ in the hundreds. Under the stress configuration, the same experiment reproduces the folklore instability across all four families: the joint ratio's clip fraction rises monotonically with $n$ (reaching $0.2$–$0.45$ at the largest $n$ tested per family), while the per-agent ratio's clip fraction remains at $0$ throughout—the signature of Lemma~\ref{lem:var}. A product of $n$ ratios leaves the trust region almost surely once $n$ is large, so it is clipped increasingly often and its gradient is throttled, whereas a single ratio never is. Where the coupling is sharp enough that this throttling starves the update of signal—the directed dilemma and dense pairwise families—the return degrades with it. Where the reward is more forgiving (local congestion, block community), the same clipping leaves the final return closer to parity, making the clip fraction the more universal diagnostic.

\begin{figure}[t]
\centering
\includegraphics[width=\textwidth]{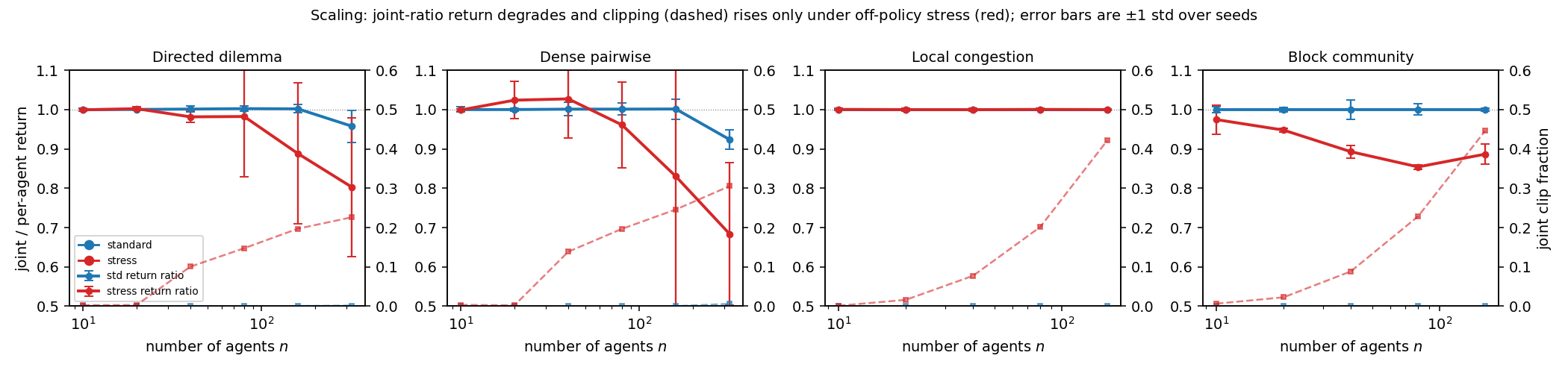}
\caption{Agent-count scaling across the four coupling families. For each family and agent count $n$ (log axis), solid curves report the joint/per-agent final return ratio (left axis), and dashed curves show the joint ratio's clip fraction (right axis). Results are shown under a standard PPO configuration (blue; $\mathrm{lr}=3\times10^{-3}$, batch $64$) and an off-policy stress configuration (red; $\mathrm{lr}=0.15$, batch $16$); both use the team advantage and match the effective step across ratio supports. Error bars indicate $\pm1$ standard deviation over seeds (on the standard curves they are smaller than the markers).}
\label{fig:scaling}
\end{figure}

\begin{remark}[Step size, not agent count, is the trigger]\label{rem:scale}
The variance factor $(1+\chi^2)^n-1$ has two levers: the support size $n$ and the per-update shift $\chi^2$. Fig.~\ref{fig:scaling} shows they act together—the gap grows with $n$, but only once $\chi^2$ is non-negligible. A conservative learning rate makes $\chi^2$ vanishingly small, so even $n$ in the hundreds is harmless; an aggressive rate makes it bite. This reconciles the theorem with the common observation that MAPPO's per-agent ratio and various joint-ratio schemes often perform comparably: they do, precisely in the near-on-policy regime where the penalty is dormant. The per-agent ratio weakly dominates always and strictly dominates whenever updates are pushed off-policy—at no cost, since the benefit of a larger ratio support is exactly zero.
\end{remark}

\begin{remark}[Why we recommend against the joint ratio, not merely note its risk]\label{rem:norec}
It might seem that, since the joint ratio is harmless in the near-on-policy regime, one could simply use it with a small step. We caution against this for a practical reason that our controlled environments understate. Here we know the coupling, the reward scale, and the effective off-policyness, so we can see that the penalty is dormant. In a real large-scale system, one does not know these in advance: the effective $\chi^2$ varies across states, agents, and training phases, and a minibatch that happens to be more off-policy—after a reward spike, an exploration burst, or a learning-rate warmup—can push the $n$-fold product out of the trust region and stall learning, with no diagnostic that distinguishes this from ordinary noise. Because the joint ratio offers zero upside over a per-agent ratio (Cor.~\ref{cor:canon}) and an unbounded, hard-to-anticipate downside that grows with $n$, the asymmetry of the bet is decisive: keep the ratio per-agent. This is especially pertinent at the scales where cooperative MARL is deployed—traffic grids, sensor and robot fleets, power networks—where $n$ is in the hundreds or thousands and the $(1+\chi^2)^n$ factor is one adverse batch away from biting. Our $196$-intersection traffic network bears this out directly: with $n=196$, the joint ratio collapses even at standard settings, while per-agent and neighborhood ratios learn normally (Appendix~\ref{app:traffic:ratio}).
\end{remark}

\section{Conclusion}

In this paper, we decomposed the credit-assignment structure of multi-agent policy optimization into two independent design axes: an advantage support and a ratio support. Our analysis reveals that the expected policy gradient depends on these supports only through their matrix product $\tS=\SR\SA$, establishing a gauge freedom that renders the two supports mutually redundant in terms of the gradient signal. This redundancy is broken by variance: aggregating through the advantage corresponds to a sum of rewards, yielding additive variance with an interior optimum at the coupling neighborhood, whereas aggregating through the ratio corresponds to a product of importance weights, incurring multiplicative, exponentially growing variance in the support size. These results lead to a clear and actionable design principle: aggregate neighbors in the advantage, sized according to the true reward coupling, and keep the ratio strictly per-agent. This rule applies universally to any factorized-policy PPO method and is supported by rigorous theoretical analysis. We validate our conclusions across four carefully designed cooperative MAMDP games and a real-world large-scale traffic signal control task, demonstrating both the correctness of the canonical form and the practical relevance of the variance asymmetry. More discussion is provided in Appendix~\ref{app:discussion}.


\bibliography{refs}
\bibliographystyle{iclr2026_conference}

\appendix
\section*{Appendix Contents}  
\startcontents  
\printcontents{}{1}{\setcounter{tocdepth}{2}}  
\newpage

\section{MAMDP Formulation}\label{app:mamdp}

We formalize the cooperative setting as a \emph{multi-agent Markov decision process} (MAMDP) \citep{littman1994markov}, equivalently a cooperative Markov game. This appendix provides the full formulation summarized in the main text.

\paragraph{Definition.} A MAMDP is a tuple
$\big(\mathcal N,\mathcal S,\{\mathcal A_i\}_{i\in\mathcal N},P,\{r_i\}_{i\in\mathcal N},\gamma,\mu_0\big)$,
where $\mathcal N=\{1,\dots,n\}$ is the set of agents; $\mathcal S$ is the state space; $\mathcal A_i$ is the finite action set of agent $i$, with joint action space $\mathcal A=\prod_{i}\mathcal A_i$ and joint action $\bm a=(a^1,\dots,a^n)$; $P(s'\mid s,\bm a)$ is the transition kernel; $r_i:\mathcal S\times\mathcal A\to\mathbb R$ is agent $i$'s reward, with team reward $\R(s,\bm a)=\sum_{i}r_i(s,\bm a)$; $\gamma\in[0,1)$ is the discount factor; and $\mu_0$ is the initial-state distribution. The setting is fully cooperative: all agents share the single team objective defined below.

\paragraph{Factorized policy.} Each agent acts through its own policy $\pi^i_{\bm\theta^i}(a^i\mid s)\in\Delta(\mathcal A_i)$ with a disjoint parameter block $\bm\theta^i$, so $\bm\theta=(\bm\theta^1,\dots,\bm\theta^n)$. Actions are conditionally independent across agents given the state, so the joint policy factorizes as
\begin{equation}\label{eq:factorize}
  \pi_{\bm\theta}(\bm a\mid s)=\prod_{i=1}^{n}\pi^i_{\bm\theta^i}(a^i\mid s).
\end{equation}
This factorization—equivalently, independent action heads sharing a state encoder—is the single structural assumption used in our analysis; it also covers the centralized multi-action policy that treats the whole team as one agent whose action is $\bm a$.

\paragraph{Objective.} The agents jointly maximize the expected discounted team return
\begin{equation}\label{eq:objective}
  J(\bm\theta)=\mathbb E_{s_0\sim\mu_0,\;\bm a_t\sim\pi_{\bm\theta}(\cdot\mid s_t),\;
  s_{t+1}\sim P(\cdot\mid s_t,\bm a_t)}\!\left[\sum_{t\ge0}\gamma^{t}\,\R(s_t,\bm a_t)\right].
\end{equation}

\paragraph{Values and advantages.} The state value is $V^{\pi}(s)=\mathbb E_{\pi}\!\big[\sum_{t\ge0}\gamma^{t}\R(s_t,\bm a_t)\mid s_0=s\big]$, and $Q^{\pi}$, the generalized advantage estimate (GAE), and the per-agent advantage are defined as usual from the per-agent rewards. The \emph{advantage support} $\SA\in\{0,1\}^{n\times n}$ selects which agents' rewards form agent $i$'s advantage: with $\hat A_j$ the single-agent advantage built from $r_j$, agent $i$'s aggregated advantage is $A_i=\sum_{j}\SA_{ij}\hat A_j$.

\paragraph{Coupling graph.} The \emph{coupling graph} $C\in\{0,1\}^{n\times n}$ has $C_{ij}=1$ iff $r_i$ depends on $a^j$ (with $C_{ii}=1$). Its smallest neighborhood radius $\rho^\star$ is the \emph{coupling radius}; $\partial i=\{j:C_{ij}=1\}$ is the coupling neighborhood of agent $i$.

\paragraph{Clipped surrogate with two supports.} Fixing a behavior policy $\bm\theta_{\mathrm{old}}$, the per-agent likelihood ratio is $\varrho^{j}=\pi^{j}_{\bm\theta}(a^j\mid s)/\pi^{j}_{\bm\theta_{\mathrm{old}}}(a^j\mid s)$. The \emph{ratio support} $\SR\in\{0,1\}^{n\times n}$ selects which agents' ratios enter agent $i$'s importance weight, $w_i=\prod_{j:\SR_{ij}=1}\varrho^{j}$, and the (clipped) multi-agent PPO surrogate is $\sum_i\mathbb E\big[\min(w_iA_i,\ \mathrm{clip}(w_i,1\pm\epsilon)A_i)\big]$. The pair $(\SA,\SR)$ is the object of study: $\SA=\SR=I$ corresponds to IPPO, $\SA=\one\one^\top,\SR=I$ to MAPPO, and $\SA=\SR=\one\one^\top$ to the fully centralized multi-action reduction.

\section{Proofs}\label{app:proofs}

Throughout, all expectations are over $\bm a\sim\pi_{\bm\theta_{\mathrm{old}}}$ unless noted, and we write $\varrho_j=\pi^j_{\bm\theta}(a^j)/\pi^j_{\bm\theta_{\mathrm{old}}}(a^j)$ for the per-agent likelihood ratio and $s^m:=\nabla_{\bm\theta^m}\log\pi^m(a^m)$ for the per-agent score.

\subsection{Proof of Lemma~\ref{lem:mean} (unbiased weight)}\label{app:proof:mean}
\begin{proof}
Fix $S\subseteq\mathcal N$. By Assumption~\ref{as:indep}, the actions $\{a^j\}_{j\in S}$ are independent under $\pi_{\bm\theta_{\mathrm{old}}}$, so the expectation of the product factorizes:
\begin{align}
\Eold\Big[\prod_{j\in S}\varrho_j\Big]
&=\prod_{j\in S}\Eold\big[\varrho_j\big]
 =\prod_{j\in S}\ \sum_{a\in\mathcal A}
   \pi^j_{\bm\theta_{\mathrm{old}}}(a)\,
   \frac{\pi^j_{\bm\theta}(a)}{\pi^j_{\bm\theta_{\mathrm{old}}}(a)}
\tag{10a}\label{eq:pf-mean-1}\\
&=\prod_{j\in S}\ \sum_{a\in\mathcal A}\pi^j_{\bm\theta}(a)
 =\prod_{j\in S} 1 = 1,
\tag{10b}\label{eq:pf-mean-2}
\end{align}
where \eqref{eq:pf-mean-1} uses independence (Assumption~\ref{as:indep}) and the definition of $\varrho_j$, and the cancellation in \eqref{eq:pf-mean-2} is valid because $\pi^j_{\bm\theta_{\mathrm{old}}}(a)>0$ (Assumption~\ref{as:supp}); the last equality follows from the normalization of $\pi^j_{\bm\theta}$. Taking $S=\{j:\SR_{ij}=1\}$ gives $\mathbb E[w_i]=1$ for every ratio support.
\end{proof}

\subsection{Proof of Lemma~\ref{lem:var} (multiplicative variance)}\label{app:proof:var}
\begin{proof}
By Lemma~\ref{lem:mean}, $\Eold[\prod_{j\in S}\varrho_j]=1$, so $\Var(\prod_{j\in S}\varrho_j)=\Eold[(\prod_{j\in S}\varrho_j)^2]-1$. The squared product factorizes over agents by independence (Assumption~\ref{as:indep}):
\begin{align}
&\Eold\Big[\Big(\prod_{j\in S}\varrho_j\Big)^2\Big]
=\prod_{j\in S}\Eold\big[\varrho_j^2\big], \\
\qquad\text{and}\qquad
&\Eold\big[\varrho_j^2\big]
=\sum_{a\in\mathcal A}\pi^j_{\bm\theta_{\mathrm{old}}}(a)
  \frac{\pi^j_{\bm\theta}(a)^2}{\pi^j_{\bm\theta_{\mathrm{old}}}(a)^2}
=\sum_{a}\frac{\pi^j_{\bm\theta}(a)^2}{\pi^j_{\bm\theta_{\mathrm{old}}}(a)} .
\label{eq:pf-var-1}
\end{align}
The rightmost sum is $1+\chi^2\!\big(\pi^j_{\bm\theta}\,\|\,\pi^j_{\bm\theta_{\mathrm{old}}}\big)=1+\chi^2_j$ by the definition of the $\chi^2$-divergence. Substituting this into \eqref{eq:pf-var-1} and subtracting $1$ gives \eqref{eq:varprod}: $\Var(\prod_{j\in S}\varrho_j)=\prod_{j\in S}(1+\chi^2_j)-1$. Since each factor $1+\chi^2_j\ge1$, the product is nondecreasing in $S$; if $\chi^2_j\ge c>0$ for all $j$, then $\prod_{j\in S}(1+\chi^2_j)-1\ge(1+c)^{|S|}-1$, which is exponential in $|S|$.
\end{proof}

\subsection{Proof of Theorem~\ref{thm:canon} (support factorization)}\label{app:proof:canon}
\begin{proof}
Evaluate the gradient \eqref{eq:grad} at $\bm\theta=\bm\theta_{\mathrm{old}}$, where every $\varrho_j=1$ and hence $w_i=1$. Writing $A_i=\sum_j\SA_{ij}r_j-b_i$,
\begin{align}
\bm g_m
&=\Eold\Big[\Big(\sum_{i:\SR_{im}=1}A_i\Big)\,s^m\Big]
 =\sum_{i:\SR_{im}=1}\ \sum_{j}\SA_{ij}\,\Eold\big[r_j\,s^m\big]
 \;-\;\sum_{i:\SR_{im}=1} b_i\,\Eold\big[s^m\big].
\label{eq:pf-thm-1}
\end{align}
The baseline term vanishes: $b_i$ is constant in $\bm a$ and $\Eold[s^m]=\sum_{a^m}\pi^m_{\bm\theta_{\mathrm{old}}}(a^m)\nabla_{\bm\theta^m}\log\pi^m(a^m)=\nabla_{\bm\theta^m}\sum_{a^m}\pi^m(a^m)=\nabla_{\bm\theta^m}1=0$. Next, by Assumption~\ref{as:sym} (symmetric coupling) and conditional independence, $\Eold[r_j\,s^m]=0$ whenever $C_{jm}=0$: if $r_j$ does not depend on $a^m$, conditioning on $a^{-m}$ gives
\begin{align}
\Eold\big[r_j\,s^m\big]
=\Eold\Big[r_j\,\Eold\big[s^m\mid a^{-m}\big]\Big]
=\Eold\big[r_j\big]\cdot\Eold\big[s^m\big]=0,
\label{eq:pf-thm-2}
\end{align}
using $\Eold[s^m\mid a^{-m}]=\Eold[s^m]=0$ (action independence). Substituting \eqref{eq:pf-thm-2} into \eqref{eq:pf-thm-1} and exchanging the order of the $i$ and $j$ sums,
\begin{align}
\bm g_m
=\sum_{j:\,C_{jm}=1}\Big(\sum_{i}\SR_{im}\SA_{ij}\Big)\Eold\big[r_j\,s^m\big]
=\sum_{j:\,C_{jm}=1}\big(\SR\SA\big)_{mj}\,\Eold\big[r_j\,s^m\big],
\label{eq:pf-thm-3}
\end{align}
where the last step uses $\SR_{im}=\SR_{mi}$ (symmetry), so $\sum_i\SR_{mi}\SA_{ij}=(\SR\SA)_{mj}=\tS_{mj}$. This is \eqref{eq:tildeS}. Since $\tS=\SR\SA$ is the only way $(\SA,\SR)$ enter, any two pairs with the same product induce the same $\bm g_m$.
\end{proof}

\subsection{Proof of Corollary~\ref{cor:canon} (per-agent-ratio canonical form)}\label{app:proof:corcanon}
\begin{proof}
The estimator $(\SR=I,\SA=\tS)$ has product $\tS'=I\cdot\tS=\tS$, identical to that of $(\SA,\SR)$. By Theorem~\ref{thm:canon}, the expected gradient depends on the supports only through this product, so the two estimators share the same $\bm g_m$; concretely, reading from \eqref{eq:tildeS}, both equal $\sum_{j:C_{jm}=1}\tS_{mj}\Eold[r_j s^m]$, which is exactly the gradient of a per-agent-ratio estimator with advantage $A'_m=\sum_j\tS_{mj}r_j$ (a nonnegative integer-weighted reward sum, hence admissible). For the joint ratio $\SR=\one\one^\top$, the reweighting is $\tS=\one\one^\top\SA$. Thus, cross-agent ratios contribute nothing to $\bm g_m$ beyond the linear advantage reweighting $r\mapsto\tS r$.
\end{proof}

\subsection{Proof of Proposition~\ref{prop:dom} (off-policy variance domination)}\label{app:proof:dom}
\begin{proof}
Both paths in \eqref{eq:pq} realize the same product $\tS=C$, so by Theorem~\ref{thm:canon} they share the mean $\bm g_m=\sum_{j\in\partial m}\Eold[r_j s^m]$. Write the per-step weights whose product with $s^m$ is averaged: for the advantage path (P), $W^{\mathrm P}_m=\sum_{j\in\partial m} r_j$; for the ratio path (Q), $W^{\mathrm Q}_m=\sum_{i\in\partial m}\big(\prod_{j\in\partial i}\varrho_j\big) r_i$.

\emph{(i) On-policy.} At $\bm\theta=\bm\theta_{\mathrm{old}}$, every $\varrho_j=1$, so $W^{\mathrm Q}_m=\sum_{i\in\partial m} r_i=W^{\mathrm P}_m$ pathwise; the two estimators are identical random variables and share mean and variance.

\emph{(ii) Off-policy.} $W^{\mathrm P}_m$ has no $\bm\theta$ dependence and is bounded: $|W^{\mathrm P}_m|\le|\partial m|\,r_{\max}$ (Assumption~\ref{as:bdd}), so $\Var(W^{\mathrm P}_m)$ is a fixed constant. For Q, take a single summand $i\in\partial m$ and condition on $s^m$; by Assumption~\ref{as:bdd}, there is $\delta>0$ with $\Eold[r_i^2\mid a^m]\ge\delta$ on a set of positive probability, and by independence of the ratios from $r_i$ and Lemma~\ref{lem:var},
\begin{align}
\Eold\Big[\Big(\prod_{j\in\partial i}\varrho_j\Big)^2\,r_i^2\Big]
\;\ge\;\delta\,\Eold\Big[\Big(\prod_{j\in\partial i}\varrho_j\Big)^2\Big]
\;=\;\delta\prod_{j\in\partial i}(1+\chi^2_j).
\label{eq:pf-dom-1}
\end{align}
Hence the second moment of the Q-weight, and therefore $\Var(\bm g^{\mathrm Q}_m)$, is bounded below by a positive multiple of $\prod_{j\in\partial i}(1+\chi^2_j)$, which grows without bound as the policy step increases (each $\chi^2_j\to\infty$) or as $|\partial m|$ grows (more factors). Since $\Var(\bm g^{\mathrm P}_m)$ stays bounded, $\Var(\bm g^{\mathrm Q}_m)/\Var(\bm g^{\mathrm P}_m)\to\infty$.
\end{proof}

\subsection{Proof of Proposition~\ref{prop:adv} (advantage bias--variance tradeoff)}\label{app:proof:adv}
\begin{proof}
Take $\SR=I$, so $\tS=\SA$ and by Theorem~\ref{thm:canon} $\bm g_m=\sum_{j:C_{jm}=1}\SA_{mj}\,\Eold[r_j s^m]$. The unbiased (full-coupling) gradient is $\bm g^\star_m=\sum_{j\in\partial m}\Eold[r_j s^m]$, i.e., the case $\SA_{mj}=1$ for all $j\in\partial m$.

\emph{Bias.} Subtracting,
\begin{align}
\bm g_m-\bm g^\star_m
=\sum_{j\in\partial m}(\SA_{mj}-1)\,\Eold[r_j s^m]
=-\!\!\sum_{j\in\partial m:\ \SA_{mj}=0}\!\!\Eold[r_j s^m],
\label{eq:pf-adv-1}
\end{align}
so the bias is exactly the sum of the score--reward couplings of the coupled agents omitted by $\SA$. It is zero iff $\SA\supseteq\partial m$ and strictly nonzero when a genuinely coupled agent with $\Eold[r_j s^m]\ne0$ is dropped. Adding agents outside $\partial m$ changes nothing, since $\Eold[r_j s^m]=0$ there by \eqref{eq:pf-thm-2}.

\emph{Variance.} The aggregated advantage $A_m=\sum_j\SA_{mj}r_j$ is additive. Including an extra agent $j$ adds the term $r_j s^m$ to the estimator. This term has mean $\Eold[r_j s^m]$ and contributes
\begin{align}
\Delta\Var
=\Var\big(r_j s^m\big)+2\,\mathrm{Cov}\Big(r_j s^m,\ \textstyle\sum_{k\in\SA_m\setminus j} r_k s^m\Big),
\label{eq:pf-adv-2}
\end{align}
whose dominant, always-nonnegative part is $\Eold[r_j^2 (s^m)^2]\ge0$ (strictly positive when $r_j$ is conditionally nondegenerate). Thus each added agent raises the variance, while only agents in $\partial m$ reduce the bias \eqref{eq:pf-adv-1}. The MSE $\|\bm g_m-\bm g^\star_m\|^2+\Var(\bm g_m)$ is therefore minimized by including exactly the coupled agents, $\SA=\{\,j:C_{jm}=1\,\}=\partial m$: smaller supports are biased, larger supports inflate variance.
\end{proof}

\section{Experimental Details and Additional Results}\label{app:expmore}

\subsection{Experimental Setup}\label{app:setup}

\subsubsection{Choice of Environments}

Our study requires environments with two properties that standard cooperative benchmarks lack: a per-agent reward (so the advantage support is meaningful) and a known coupling graph (so the coupling neighborhood is defined). Popular suites such as StarCraft \citep{whiteson2019starcraft} and Google Research Football \citep{kurach2020google} expose a single shared team reward and no explicit coupling structure; verifying the advantage-support axis there would require synthesizing per-agent signals via difference rewards, introducing approximation and forfeiting the ground-truth coupling graph. We therefore study synthetic cooperative games in which the coupling graph is prescribed exactly, allowing us to vary the coupling family, its range, and the agent count while keeping the ground truth fixed. Applying the same analysis to real systems that natively provide per-agent rewards and a physical coupling graph—traffic-signal control being a canonical example—is a natural direction we leave to future work. Our goal here is a clean verification of the mechanism, not benchmark leaderboard performance, so the environments are simple and fully specified.

The families, shown as Fig. \ref{fig:families}, are chosen to separate the two axes the theory distinguishes, and to stress each in turn. They differ first in whether an agent internalizes its own externality: in the directed dilemma, an agent's action harms only its successors, not itself, so an independent advantage is badly biased and the advantage support is decisive for the learned policy; in dense pairwise and local congestion, the coupling term enters the actor's own reward (fully or partly), so an independent advantage is already near-unbiased and the advantage support mainly trades variance. This spread lets us show that the advantage radius matters for the outcome exactly when the externality is not self-internalized (Fig.~\ref{fig:advsweep}). The families differ second in graph structure: the ring families (dilemma, pairwise, congestion) provide a geometric coupling with a well-defined radius, while block community gives a non-geometric, community-structured graph—verifying that matching the advantage support to the coupling neighborhood is about the coupling graph, not about spatial distance. The ratio-support conclusions (redundancy and variance ordering), by contrast, depend only on the number of factors and hold across all four, which is why we use every family for the ratio studies but highlight the dilemma for the advantage-outcome effect.

\subsubsection{Coupling Families: Reward Definitions}

All environments used in this paper are cooperative games with a prescribed coupling graph $C$, so the ground-truth coupling neighborhood is known exactly. For completeness, we briefly restate the notation. There are $n$ agents; each agent $i$ observes a phase or context and selects a discrete action $a^i \in \{1,\dots,K\}$, where $K$ is the number of actions per agent. We write the joint action as $\bm a = (a^1,\dots,a^n)$. Each agent receives a per-agent reward $r_i(\bm a)$, defined per family below, and the team return is their sum $\R = \sum_i r_i$. The coupling graph $C \in \{0,1\}^{n \times n}$ records reward dependencies, with $C_{ij}=1$ iff $r_i$ depends on $a^j$; its neighborhood $\partial i = \{j : C_{ij}=1\}$ is the set of agents whose actions enter $r_i$. For the ring families, $\partial i$ consists of the agents within graph distance $\rho^\star$ on a ring, and we refer to $\rho^\star$ as the coupling radius. Zero-mean Gaussian observation noise $\varepsilon_i \sim \mathcal N(0,\sigma^2)$ with standard deviation $\sigma$ is added where noted.

\begin{figure}[t]
\centering
\includegraphics[width=\textwidth]{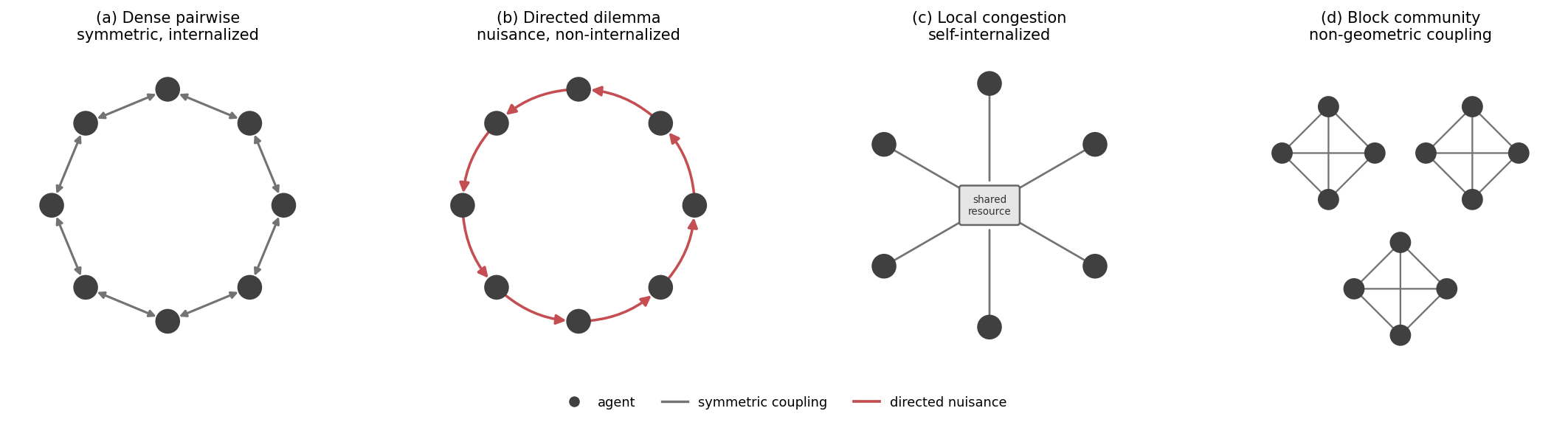}
\caption{Schematic of the four coupling families (illustrated at a reduced scale; nodes represent agents, edges represent reward dependencies). \textbf{(a)} Dense pairwise: undirected edges couple ring neighbors symmetrically, so that each agent's action affects its neighbors as much as itself. \textbf{(b)} Directed dilemma: each agent transmits a one-way nuisance (red) to its downstream successors, which the actor does not bear. \textbf{(c)} Local congestion: neighbors selecting the same resource share the reward. \textbf{(d)} Block community: all-to-all coupling within disjoint blocks, with no coupling across blocks.}
\label{fig:families}
\end{figure}

\paragraph{Dense pairwise (symmetric, internalized).} This family uses a symmetric pairwise payoff over the ring neighborhood,
\begin{equation}\label{eq:densepair}
r_i(\bm a)=\sum_{j\in\partial i,\,j\ne i} M[a^i,a^j]+\varepsilon_i,\qquad
M=M^\top\in\mathbb R^{K\times K},\ \varepsilon_i\sim\mathcal N(0,\sigma^2),
\end{equation}
where $M$ is a fixed symmetric $K\times K$ payoff matrix whose entry $M[a^i,a^j]$ is the payoff agent $i$ obtains from the action pair $(a^i,a^j)$ with neighbor $j$, and $\varepsilon_i$ is the observation noise defined above. Because $M$ is symmetric, the same edge $M[a^i,a^j]$ contributes to both $r_i$ and $r_j$; hence an agent's action affects its neighbors' reward as strongly as its own—the externality is internalized. This game supplies the single-step results in Fig.~\ref{fig:motiv} ($n=20$, $K=4$, $\rho^\star=4$, $\sigma=0.6$); the training study uses $n=24$.

\paragraph{Directed dilemma (non-internalized externality).} On a directed ring, choosing a high-value action gives the actor a private benefit but imposes a nuisance on its $\rho^\star$ successors:
\begin{equation}\label{eq:dilemma}
r_i(\bm a)=v(a^i)-c\!\!\sum_{j:\,i\in\partial^{+}j}\! g(a^j),
\end{equation}
where $v(a^i)$ is the private value agent $i$ obtains from its own action, $g(a^j)$ is the nuisance that agent $j$'s action imposes on its downstream neighbors, $c>0$ is the weight of that nuisance, and $\partial^{+}j$ denotes the $\rho^\star$ successors of $j$ on the directed ring; the sum thus runs over those agents $j$ whose successor set contains $i$, i.e., the predecessors that emit onto $i$. A rotating phase makes the tempting action time-varying. Since the actor does not bear the harm it causes, an independent advantage converges to the tragedy-of-the-commons equilibrium, whereas a neighborhood advantage recovers the social optimum. We use $n=20$, $\rho^\star=2$, $c=3$.

\paragraph{Local congestion (self-internalized).} Agents within $\partial i$ that choose the same resource split its value:
\begin{equation}\label{eq:congestion}
r_i(\bm a)=\frac{v(a^i)}{\big|\{j\in\partial i: a^j=a^i\}\big|}+\varepsilon_i,
\end{equation}
where $v(a^i)$ is the base value of the resource selected by action $a^i$, and the denominator counts the number of agents in the neighborhood $\partial i$ (including $i$ itself) that chose the same resource, so the resource value is shared equally among them. An agent that over-subscribes a popular resource thus depresses its own reward as well as its neighbors'—the crowding cost is partly self-borne. We use $n=20$, $\rho^\star=2$.

\paragraph{Block community (non-geometric coupling).} Agents are partitioned into disjoint blocks; coupling is all-to-all within a block and absent across blocks, so $C$ is block-diagonal, with an intra-block symmetric payoff matrix $M$ as in the dense-pairwise case \eqref{eq:densepair} (here $\partial i$ is $i$'s entire block rather than a ring neighborhood). This setup tests a coupling graph that is not a geometric ring. We use $n=20$ and block size $5$.

In every case, the reward parameters ($M$, $v$, $g$, block assignment) are fixed per environment; we report mean $\pm$ standard deviation over random seeds. Full hyperparameters are deferred to the appendix.

\subsection{Extended Training-Curve Analysis}\label{app:figs}

This subsection reports the full per-family training dynamics summarized in the main text: the advantage-support curves (complementing the sweep in Fig.~\ref{fig:advsweep}) and the ratio-support redundancy curves. Together, they demonstrate, family by family, that enlarging the advantage support can change the learned policy, whereas enlarging the ratio support never does.

Figure~\ref{fig:traincurves} plots the team return over PPO iterations for each family under three advantage supports—independent ($\rho_A=0$), coupling-radius, and joint—with the ratio held per-agent. The directed dilemma shows a dramatic separation: independent advantage descends to the tragedy equilibrium, while coupling-radius and joint advantage climb to the social optimum. In the other three families, where the externality is self-internalized, independent advantage already learns effectively, and larger supports add variance rather than altering the outcome. This illustrates the advantage side of the asymmetry: the advantage support can shift the learned policy.

To demonstrate that the ratio radius is redundant, we fix the team advantage and sweep the ratio support $\rho_R \in \{\text{per-agent}, \text{neighborhood}, \text{joint}\}$ (Fig.~\ref{fig:redundancy}). Within every family, the three learning curves are indistinguishable—overlapping to within seed noise, with clip fraction $0$ throughout—so enlarging the ratio support changes neither the trajectory nor the final outcome, exactly as predicted by Corollary~\ref{cor:canon}. This is the ratio-side counterpart to Fig.~\ref{fig:traincurves}: enlarging the advantage support can change the outcome, whereas enlarging the ratio support never does. 

\begin{figure}[t]
\centering
\includegraphics[width=\textwidth]{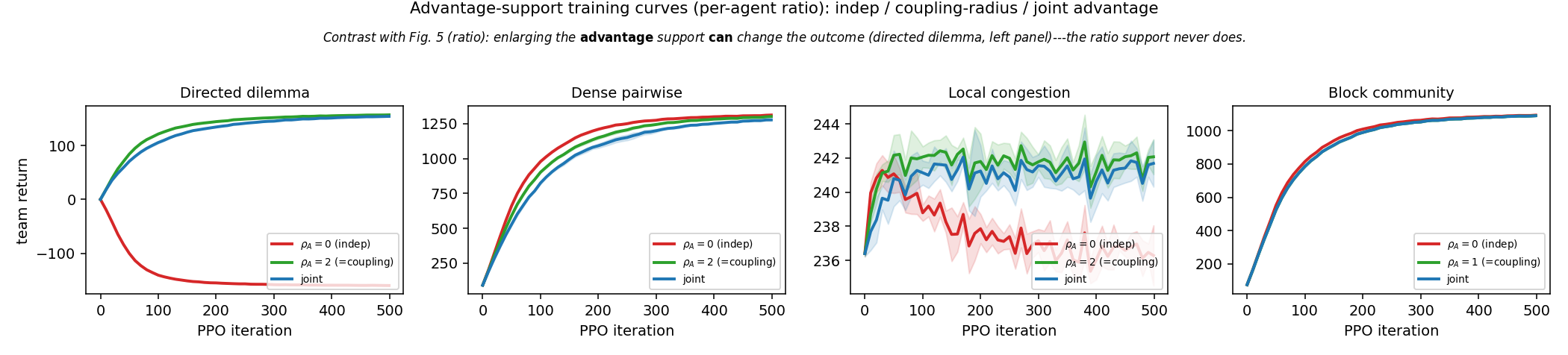}
\caption{Advantage-support training curves across the four coupling families. Team return versus PPO iteration ($5$ seeds, $\pm1$ std bands) under three advantage supports: independent ($\rho_A=0$), coupling-radius, and joint, with the ratio held per-agent. One panel per family. In the directed-dilemma and dense-pairwise panels, the seed variance is below $1.5\%$ of the return scale, so the bands are narrower than the line width.}
\label{fig:traincurves}
\end{figure}

\begin{figure}[t]
\centering
\includegraphics[width=\textwidth]{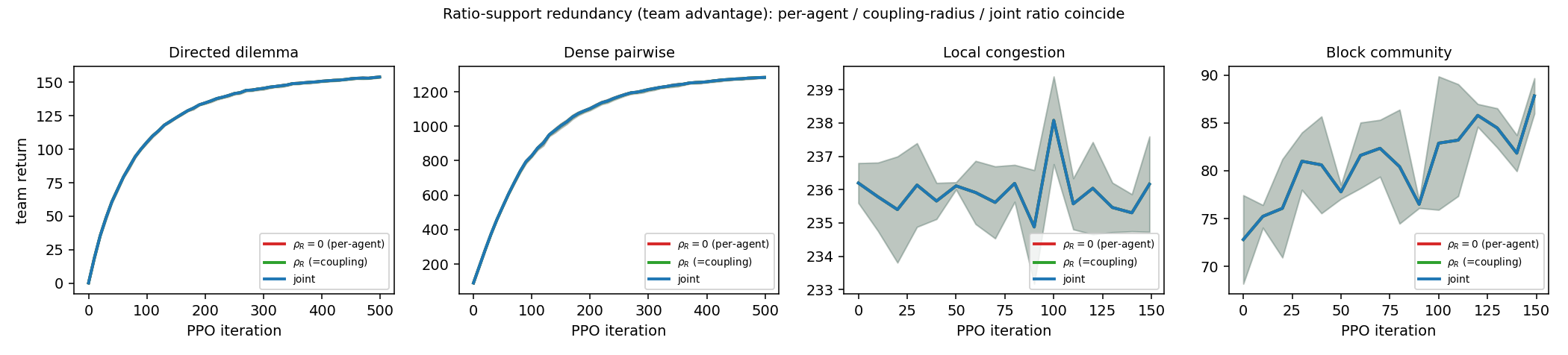}
\caption{Ratio-support redundancy under standard PPO across the four coupling families. Team return versus PPO iteration ($\pm1$ std bands over seeds) with team advantage fixed. Each panel shows three ratio supports: per-agent ($\rho_R=0$), coupling-radius, and joint.}
\label{fig:redundancy}
\end{figure}

\subsection{Real-World Traffic-Signal Control}\label{sec:exp:traffic}

Finally, we test the advantage-support prediction (Proposition~\ref{prop:adv}) on a real cooperative traffic-signal control problem that natively provides both ingredients the theory requires: a per-agent reward (each intersection's local queue, delay, and throughput) and a physical coupling graph (the road network). We use the SUMO \citep{behrisch2011sumo} Manhattan $28\times7$ grid ($n=196$ traffic signals) via \texttt{sumo-rl} \citep{sumorl}, with the standard MLP-actor MAPPO from the toy suite; the only tunable parameter is the advantage support, built by $k$-hop expansion on the road-adjacency graph, with a per-agent ratio throughout. 

Each intersection $i$ is an agent with a local observation $o_i$ (the standard \texttt{sumo-rl} encoding): a one-hot of the active green phase, a binary flag indicating whether the minimum green time has elapsed, and, for every incoming lane, the normalized vehicle density and queue length. The action $a_i$ is discrete—the choice of the next green phase from that intersection's signal program—applied on a fixed $10$\,s control cycle with a $2$\,s yellow transition and a $5$\,s/$50$\,s minimum/maximum green. The shared MLP actor maps $o_i$ to a categorical distribution over $a_i$; agents share parameters but act independently, so the joint policy factorizes as $\prod_i\pi^i(a_i)$, exactly as our analysis assumes. Each agent's reward is a purely local combination of its own traffic state:
\begin{equation}\label{eq:trafficreward}
  r_i \;=\; w_v\,\bar v_i \;-\; w_q\,q_i \;-\; w_\omega\,\omega_i \;-\; w_p\,p_i,
  \qquad (w_v,\,w_q,\,w_\omega,\,w_p)=(3.0,\,1.0,\,0.3,\,1.5),
\end{equation}
where $q_i$ is the total queue length, $\omega_i$ the total waiting time, $p_i$ the inflow--outflow pressure $|\mathrm{in}-\mathrm{out}|$, and $\bar v_i$ the mean speed, each normalized per lane (by $50$, $1000$, $20$ vehicles, and $15$\,m/s, respectively). Here $r_i$ depends on other intersections only through physical spillover onto adjacent roads—so the reward coupling graph is exactly the road-network adjacency (each signal coupled to its up-to-four grid neighbors), and the range $\rho^\star$ is what our advantage support aims to match.

\subsubsection{Advantage-Support Study: Global Advantage Collapses; Local Advantages Work (Fig.~\ref{fig:traffic})}

Figure~\ref{fig:traffic} plots the three traffic metrics—team reward (the trained objective), average queue length, and average waiting time—over PPO iterations, comparing the global (team) advantage against three local advantage supports; Table~\ref{tab:traffic} reports the converged metrics. A team advantage—the radius-$\infty$ support that aggregates all $196$ intersections' rewards, as in vanilla MAPPO—fails to learn: its team return remains flat near its initial value while local advantages climb steadily, and at convergence it is $36\%$ worse in return, with $2.4\times$ the average queue and $3.7\times$ the average waiting time (Table~\ref{tab:traffic}). This is Proposition~\ref{prop:adv}(ii) at scale: summing $196$ per-intersection rewards drowns each signal's own learning contribution in the noise of $195$ others. In contrast, the three local supports—independent ($\rho_A=0$) and one/two-hop neighborhoods—all learn well and closely track one another, with a mild monotone gain from including immediate neighbors; the insets confirm that these three curves differ only marginally and within the seed bands. The small spread among local radii indicates that traffic coupling is largely self-internalized (an intersection's own queue strongly reflects its own action), placing this system in the regime where the advantage-support bias is small and the dominant effect is the variance cost of an over-large support—exactly the failure mode exhibited by the team advantage. The team--local gap far exceeds the seed spread.

\begin{figure}[t]
\centering
\includegraphics[width=\textwidth]{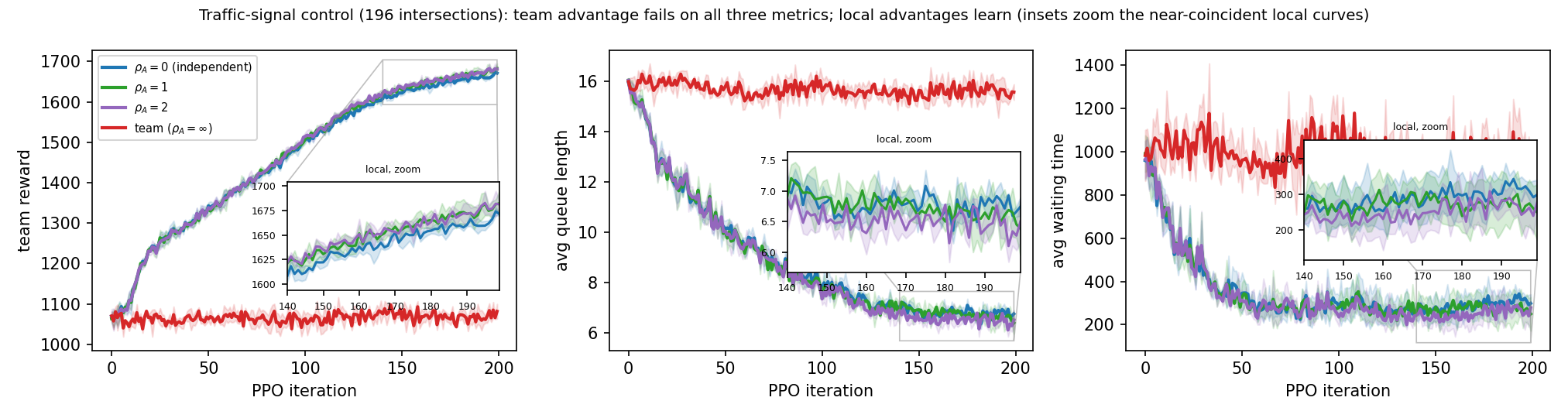}
\caption{Traffic-signal control on a $196$-intersection network. Advantage-support training curves (per-agent ratio; mean over $3$ seeds, $\pm1$ std shaded bands) on all three metrics—team reward, average queue, and average waiting time—versus PPO iteration. Each panel compares three local advantage supports (independent ($\rho_A=0$) and one/two-hop neighborhoods) against the global/team advantage ($\rho_A=\infty$). Insets zoom the last $60$ iterations for the three local supports alone ($\rho_A=0,1,2$).}
\label{fig:traffic}
\end{figure}

\begin{table}[t]
\centering
\caption{Traffic-signal control, $196$ intersections, final metrics (mean $\pm$ std over $3$ seeds, last $20\%$ of training). Local advantages show strong performance; the global (team) advantage collapses.}
\label{tab:traffic}
\begin{tabular}{lcccc}
\toprule
advantage support & team return $\uparrow$ & avg queue $\downarrow$ &
avg wait (s) $\downarrow$ & clip frac \\
\midrule
independent ($\rho_A=0$)   & $1652.8{\pm}4.5$ & $6.77{\pm}0.04$  & $298.0{\pm}47.1$ & $0.014$ \\
$1$-hop neighborhood         & $1663.7{\pm}7.5$ & $6.68{\pm}0.21$  & $275.9{\pm}57.7$ & $0.016$ \\
$2$-hop neighborhood         & $1663.8{\pm}6.0$ & $6.50{\pm}0.06$  & $258.6{\pm}45.4$ & $0.016$ \\
global / team ($\rho_A=\infty$) & $1068.1{\pm}7.5$ & $15.62{\pm}0.23$ & $968.9{\pm}48.7$ & $0.000$ \\
\bottomrule
\end{tabular}
\end{table}

\subsubsection{Ratio-Support Study: Redundancy and Variance Ordering}
\label{app:traffic:ratio}

The advantage study above fixes a per-agent ratio and varies the advantage support. We now perform the converse: vary the ratio support—per-agent ($\SR=I$), $1$-hop and $2$-hop neighborhoods, and joint ($\SR=\one\one^\top$, the $196$-way product)—and, to verify that the ratio-side conclusion does not depend on how the advantage is chosen, we repeat the entire sweep under two fixed advantage baselines: the $1$-hop neighborhood advantage and the independent ($\SA=I$) advantage. All runs use a standard PPO configuration. This isolates the ratio support on a real system and tests the two ratio-side predictions—redundancy (Corollary~\ref{cor:canon}) and variance ordering that worsens with $n$ (Proposition~\ref{prop:dom}, Fig.~\ref{fig:scaling}).

Figure~\ref{fig:traffic_ratio} plots, for each advantage baseline (rows) and each metric (columns), the training curve of every ratio support. The three small ratio supports—per-agent, $1$-hop, and $2$-hop—learn equally well and reach essentially the same reward, queue, and waiting time: enlarging the ratio support from $0$ to $2$ hops changes nothing about the learned policy, exactly as the redundancy corollary (Corollary~\ref{cor:canon}) predicts. The joint $196$-way ratio, in contrast, collapses outright and it does so whether the advantage is the $1$-hop neighborhood or fully independent. This is the sharpest real-system signature of the $n$-dependence in Fig.~\ref{fig:scaling}: with $n=196$ agents, the joint ratio is a $196$-way product whose variance factor $\prod_j(1+\chi^2_j)$ is enormous, drowning the surrogate gradient in variance well before any benefit could accrue. Crucially, no off-policy stress is needed to expose this—unlike the toy games, where $n\le24$ keeps the standard-configuration penalty dormant, the traffic network is large enough that the joint ratio fails under an ordinary, conservative PPO configuration. That the collapse and the redundancy are identical across the two advantage baselines confirms that the ratio-side conclusion is a property of the ratio support alone, independent of the advantage support with which it is paired.

\begin{figure}[t]
\centering
\includegraphics[width=\textwidth]{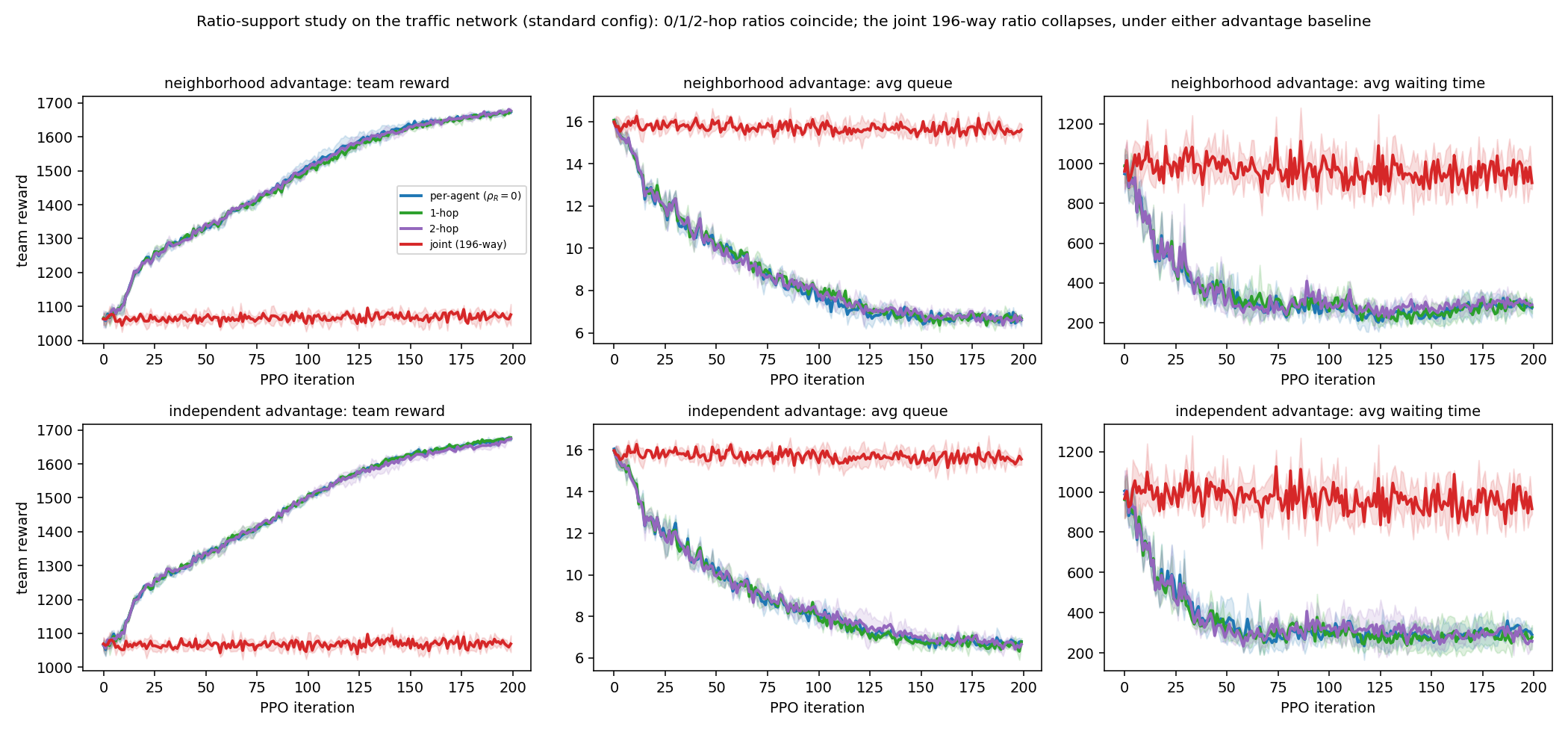}
\caption{Ratio-support study on the $196$-intersection network under a standard PPO configuration (mean over $3$ seeds, $\pm1$ std bands). Top row: the $1$-hop neighborhood advantage held fixed. Bottom row: the independent ($\SA=I$) advantage held fixed. Columns show team reward, average queue, and average waiting time versus PPO iteration. Within each panel, four curves vary the ratio support: per-agent ($\rho_R=0$), $1$-hop, $2$-hop, and joint ($196$-way).}
\label{fig:traffic_ratio}
\end{figure}

\begin{table}[t]
\centering
\caption{Ratio-support study on the traffic network, final metrics (mean $\pm$ std over $3$ seeds, last $20\%$), under two fixed advantage baselines. Ratio supports of $0$, $1$, and $2$ hops learn; the joint $196$-way ratio collapses under either baseline.}
\label{tab:traffic_ratio}
\begin{tabular}{llcccc}
\toprule
advantage & ratio support & team return $\uparrow$ & avg queue $\downarrow$ &
avg wait (s) $\downarrow$ & clip frac \\
\midrule
\multirow{4}{*}{$1$-hop nbr}
 & per-agent ($\rho_R=0$) & $1659.5{\pm}4.6$ & $6.69{\pm}0.04$ & $286.0{\pm}39.0$ & $0.014$ \\
 & $1$-hop                  & $1657.1{\pm}2.7$ & $6.71{\pm}0.06$ & $280.3{\pm}36.6$ & $0.016$ \\
 & $2$-hop                  & $1659.4{\pm}6.2$ & $6.72{\pm}0.15$ & $295.8{\pm}29.2$ & $0.016$ \\
 & joint ($196$-way)        & $1071.1{\pm}6.7$ & $15.63{\pm}0.18$ & $950.6{\pm}37.1$ & $0.041$ \\
\midrule
\multirow{4}{*}{independent}
 & per-agent ($\rho_R=0$) & $1656.5{\pm}6.5$ & $6.77{\pm}0.09$ & $301.4{\pm}42.9$ & $0.015$ \\
 & $1$-hop                  & $1658.6{\pm}1.5$ & $6.68{\pm}0.13$ & $286.4{\pm}60.3$ & $0.016$ \\
 & $2$-hop                  & $1649.7{\pm}6.1$ & $6.79{\pm}0.12$ & $291.0{\pm}8.1$ & $0.015$ \\
 & joint ($196$-way)        & $1070.6{\pm}7.0$ & $15.63{\pm}0.21$ & $945.8{\pm}46.1$ & $0.041$ \\
\bottomrule
\end{tabular}
\end{table}

\section{Related Work}\label{sec:related}

\paragraph{Trust-region and proximal MARL.}
Extending PPO and TRPO to cooperative MARL requires deciding how the importance ratio is formed across agents. MAPPO \citep{yu2022mappo} pairs a shared team advantage with a per-agent ratio and serves as a strong benchmark baseline; IPPO trains each agent independently, also with a per-agent ratio. At the other extreme, HATRPO and HAPPO \citep{kuba2022trust} update agents sequentially and clip a compound ratio—a product of the ratios of previously updated agents—justified by the multi-agent advantage-decomposition lemma. These methods span the ratio-support axis we study, from the per-agent extreme ($\SR=I$) to the fully joint extreme ($\SR=\one\one^\top$), yet the choice is typically made by architectural convention rather than analysis. Our results speak directly to this decision: the compound or joint ratio is gradient-redundant relative to a per-agent ratio with a suitably reweighted advantage (Theorem~\ref{thm:canon}), and incurs strictly higher variance (Proposition~\ref{prop:dom}).

\paragraph{Locality and scalability in networked MARL.}
A parallel line of work exploits the observation that, in networked systems, an agent's influence decays with graph distance. \cite{qu2020scalable,qu2022scalable} formalize this as an exponential decay property and use it to prove that truncating each agent's value estimate to a $\kappa$-hop neighborhood incurs only $O(\rho^\kappa)$ error, yielding provably scalable actor--critic algorithms. This provides the theoretical grounding for our advantage-support axis: if reward coupling has finite range $\rho^\star$, then aggregating rewards over the $\rho^\star$-neighborhood is nearly unbiased, while a smaller support omits real externalities and a larger one only adds variance. We complement this literature by (i) separating the advantage support from the ratio support and (ii) showing that the two are redundant on the expected gradient—questions orthogonal to value-function truncation.

\paragraph{Counterfactual and difference credit.}
COMA \citep{foerster2018coma} and difference-reward methods \citep{wolpert2001coin,li2022difference} reduce variance by subtracting a counterfactual baseline whose expected gradient contribution is zero; recent per-agent advantage estimators \citep{genadv2026} prove policy-gradient invariance of such baselines under the factorized policy. These works concern the advantage or baseline and its zero-offset property; none addresses the ratio support or the factorization of the expected gradient through $\tS=\SR\SA$, which is the object of our study.

\paragraph{Local advantages and local ratios.}
Several applied methods already restrict the advantage to a neighborhood without a general analysis. In traffic-signal control, MA2C \citep{chu2020ma2c} introduces a spatial discount factor that down-weights distant agents' rewards inside each local return—an advantage support tapered by graph distance. We observe an asymmetry in this literature: a rich set of methods aggregate the advantage locally, but we are not aware of any cooperative method that aggregates the ratio locally (e.g., a local importance-sampling weight $\prod_{j\in\partial i}\varrho^j$). Where a joint ratio does appear, it is typically not chosen for its own sake but inherited from casting the whole team as a single multi-action agent: running PPO on the factorized joint policy $\prod_i\pi^i$ treats $\bm a$ as one action and thus clips the product of all agents' ratios. This reduction is attractive because it turns MARL into single-agent RL—with its simpler objective and convergence story—and is used, for example, to make combinatorial cooperative tasks tractable as a single-controller problem. Our results explain the resulting high variance: the two supports are interchangeable on the expected gradient (Theorem~\ref{thm:canon}), so the joint ratio buys nothing over a per-agent ratio, yet it is strictly the higher-variance route (Proposition~\ref{prop:dom})—a product rather than a sum. This variance is therefore not a flaw of any particular single-agent-reduction algorithm but an inherent cost of aggregating on the ratio side, and it can be removed by moving the aggregation into the advantage. The field's revealed preference for local advantages over local ratios is exactly what this variance-ordering predicts; we subsume the local-advantage heuristics as particular advantage-support choices and provide the design principle behind them.

\section{Discussion}\label{app:discussion}

\paragraph{Existing methods on the bias--variance plane.} The two supports provide a common coordinate system for methods that otherwise appear unrelated. IPPO sits at the low-variance, high-bias corner ($\SA=\SR=I$): both supports per-agent, so its estimator has minimal variance but ignores every externality. MAPPO moves the advantage to the team ($\SA=\one\one^\top,\SR=I$): it removes the missing-coupling bias but, on a large team, pays the additive variance of summing all rewards—and, as our $196$-intersection experiment shows, this already fails when the team is large and only a local neighborhood is truly coupled. Neighborhood-advantage methods (spatial-discount and local-critic actor--critics) occupy the principled middle: $\SA$ is a $k$-hop neighborhood, $\SR=I$, which our analysis identifies as the MSE-optimal choice when the coupling has range $k$. At the far corner lie single-agent reductions that treat $\prod_i\pi^i$ as one multi-action policy—for example, CMAT \citep{cmat2025}, which factorizes the joint policy through a latent consensus and then optimizes it with single-agent PPO on the joint action. Such methods effectively push both supports to fully joint ($\SA=\SR=\one\one^\top$), so $\tS=n\,\one\one^\top$—the maximal effective support. In our terms, this is the least biased and most variant point in the plane: it aggregates every neighbor's reward and multiplies every neighbor's ratio. Our results indicate that the first is unnecessary beyond the coupling neighborhood and the second is strictly harmful; the elegant equivalence to single-agent RL is real at the level of the expected objective, but it is bought with a variance that, in a large system, need not pay off. This is not a defect of any particular such algorithm but an inherent property of the reduction—the maximal-support corner is high-variance by construction—and our canonical form shows the variance is removable without changing what is learned, by moving all aggregation onto the advantage and sizing it to the coupling graph. This reframes per-agent versus centralized not as a binary but as a point in a two-dimensional support plane whose optimal location the theory pins down.

\paragraph{Does the result apply to multi-action (centralized) PPO?} A natural concern is that our analysis concerns multi-agent PPO, whereas a centralized controller may treat the whole team as a single agent whose action is the joint $\bm a=(a^1,\dots,a^n)$ and run vanilla single-agent PPO on it—the multi-action view underlying CMAT \citep{cmat2025}. Our results apply directly to this setting, because the only structural assumption we use is that the policy factorizes across action components, $\pi(\bm a)=\prod_i\pi^i(a^i)$ (Assumption~\ref{as:indep})—i.e., independent categorical heads, which is exactly how such multi-action policies are built. Under that factorization, multi-action PPO is the corner $(\SA,\SR)=(\one\one^\top,\one\one^\top)$: a single scalar advantage (the team return) is broadcast to every action component, and the importance weight is the ratio of the joint action, $\prod_i\varrho^i$. The canonical form (Theorem~\ref{thm:canon}) therefore governs it unchanged—the joint ratio is gradient-redundant relative to a per-agent ratio with a reweighted advantage, and strictly higher-variance (Prop.~\ref{prop:dom})—so the design rule transfers verbatim: clip each action component's ratio separately rather than as one $n$-way product, which turns multi-action PPO back into a per-agent-ratio scheme (a MAPPO-style update) at no cost to the expected gradient. Two boundaries are worth stating. First, the factorization is essential: an autoregressive joint head (component $a^{j}$ conditioned on $a^{<j}$) uses a different chain rule and is out of scope. Second, when the environment provides only a single non-decomposable team reward, the advantage cannot be localized (there is no per-agent reward to aggregate over a neighborhood), so the bias-reduction half of our rule is inapplicable; however, the variance half still holds—the joint ratio should be replaced by a per-agent ratio regardless—which is precisely the sense in which a MAPPO-style per-agent ratio dominates a fully centralized multi-action update even in the single-team-reward regime.

\paragraph{Two caveats the experiments make precise.} The design rule is exact at the level of the expected gradient, but two qualifications govern how visibly it matters in training. First, the advantage support reshapes the learned policy only when the coupling externality is not already internalized by an agent's own reward: in the directed dilemma, independent advantage collapses to the tragedy equilibrium and a neighborhood advantage recovers the social optimum, whereas in symmetric-payoff, congestion, and block games, independent advantage is already near-unbiased and larger supports only add variance (Fig.~\ref{fig:advsweep}). Second, the ratio's variance penalty is gated by the per-update policy shift $\chi^2$: a conservative trust region keeps $\chi^2$ so small that per-agent and joint ratios are indistinguishable even at hundreds of agents, and the penalty surfaces only as updates are driven off-policy (Fig.~\ref{fig:scaling}). Neither qualification weakens the design rule—the ratio's benefit is exactly zero and its cost is non-negative throughout—but both explain why the folklore that per-agent ratios are more stable is only intermittently observed in practice.

\paragraph{Implications for large-scale deployment.} The design principle is most consequential precisely where cooperative MARL is scaling: systems with hundreds to thousands of agents and local physical coupling—traffic-signal grids, sensor and robot fleets, power and communication networks, warehouse logistics. Two concrete recommendations follow. \emph{(1) Never aggregate the ratio.} At these scales, the joint ratio's variance factor $(1+\chi^2)^n$ is enormous the moment any batch drifts off-policy, and our $196$-intersection result shows the analogous failure for an over-large advantage (a global team advantage never learns, while local advantages cut queues by $2.4\times$ and waiting time by $3.7\times$). Keep the ratio per-agent; put every bit of cross-agent aggregation in the advantage, where the cost is additive. \emph{(2) Size the advantage to the physical coupling.} Because these systems come with a known interaction graph (the road network, the communication topology, the spatial adjacency), the MSE-optimal advantage support is directly available—the $k$-hop neighborhood for a coupling of range $k$—rather than something to be tuned blindly. This turns a hyperparameter search into a modeling choice and connects to the scalable-MARL line \citep{qu2020scalable,qu2022scalable}, whose exponential-decay property is exactly the condition under which a small advantage neighborhood is near-unbiased. Neither recommendation requires changing the network architecture or the training loop—only which rewards and which ratios enter each agent's update—so they are immediately actionable for existing MAPPO and IPPO codebases.

\paragraph{Future work: learning the support.} Our design rule assumes the coupling neighborhood is known—true for physically networked systems, but not in general. When the coupling graph is unknown or state-dependent, the open problem is to learn the right advantage support: which neighbors to aggregate, and at what radius, so as to minimize gradient MSE. This is a bias--variance model-selection problem—estimate each agent's influence graph (e.g., from reward sensitivities or attention over other agents' states) and include an agent in the support when its estimated coupling outweighs the variance its reward adds. A dynamic, per-state support that grows in strongly-coupled regimes and shrinks otherwise would sit at the MSE optimum adaptively, generalizing the fixed $k$-hop rule; the canonical form guarantees that whatever support is learned, it should be realized on the advantage and never on the ratio.

\end{document}